\documentclass[11pt]{article}

\usepackage[utf8]{inputenc}
\usepackage[T1]{fontenc}
\usepackage{iftex}
\PassOptionsToPackage{colorlinks,urlcolor=blue,citecolor=blue,linkcolor=blue}{hyperref}
\usepackage{url}
\usepackage{booktabs}
\usepackage{amsfonts}
\ifdefined\DisableMicrotype
\else
  \ifPDFTeX
    \ifnum\pdfoutput>0
      \usepackage[protrusion=true,expansion=true,tracking=true]{microtype}
    \else
      \usepackage[protrusion=true,expansion=false,tracking=true]{microtype}
    \fi
  \else
    \usepackage[protrusion=true,expansion=false,tracking=false]{microtype}
  \fi
\fi
\usepackage{nicefrac}
\ifPDFTeX
  \usepackage{bbm}
\else

\fi
\usepackage{xspace}
\usepackage{algpseudocode}
\usepackage{algorithm}
\usepackage{bm}
\usepackage{amsmath}
\usepackage{amsthm}
\usepackage{amssymb}
\usepackage{thmtools}
\usepackage{thm-restate}
\usepackage{float}
\usepackage{subcaption}
\usepackage{mathtools}
\usepackage[most]{tcolorbox}
\usepackage{xparse}
\usepackage{paralist}
\usepackage{enumitem}
\usepackage{fancyhdr}
\usepackage{multirow}
\usepackage{pifont}
\usepackage{hyperref}
\usepackage{tikz}
\usepackage{graphicx}

\usetikzlibrary{calc,arrows.meta,positioning}

\usepackage[capitalize,nameinlink,noabbrev]{cleveref}
\newif\ifneuripsbuild
\ifdefined\NeuripsBuild
  \neuripsbuildtrue
\else
  \neuripsbuildfalse
\fi

\newif\ifusedeferredproofs
\ifdefined\EnableDeferredProofs
  \usedeferredproofstrue
\else
  \usedeferredproofsfalse
\fi
\ifusedeferredproofs
  \usepackage[
    createShortEnv,
    commandRef=Cref,
    disablePatchSection
  ]{proof-at-the-end}
\fi
\usetikzlibrary{arrows.meta,positioning,shapes.geometric}

\makeatletter
\@ifundefined{theHALG@line}
  {\def\theHALG@line{\thealgorithm.\arabic{ALG@line}}}
  {\renewcommand{\theHALG@line}{\thealgorithm.\arabic{ALG@line}}}
\makeatother

\DeclareMathOperator{\supp}{supp}

\newcommand{\Z}{\mathbb{Z}}

\newcommand{\poly}{\operatorname{poly}}
\newif\ifshowclarify
\showclarifytrue
\newcommand{\HideClarify}{\showclarifyfalse}

\newif\ifshowchanges
\showchangestrue

\newcommand{\ShowChanges}{\showchangestrue}

\NewDocumentEnvironment{changedblock}{+b}{\ifshowchanges
    \begingroup\color{blue}#1\endgroup
  \else
    #1
  \fi
}{}

\newif\ifshowresearchboxes
\newif\ifshowmotivationboxes
\newif\ifshowinterpretationboxes
\newif\ifshowverificationboxes

\showresearchboxestrue
\showmotivationboxestrue
\showinterpretationboxestrue
\showverificationboxestrue

\newcommand{\HideAllResearchBoxes}{\showresearchboxesfalse
  \showmotivationboxesfalse
  \showinterpretationboxesfalse
  \showverificationboxesfalse
}

\NewDocumentEnvironment{motivation}{+b}{\ifshowresearchboxes
    \ifshowmotivationboxes
      \begin{tcolorbox}[
        breakable,
        colback=blue!4,
        colframe=blue!45!black,
        title=Motivation,
        fonttitle=\bfseries
      ]
      #1
      \end{tcolorbox}
    \fi
  \fi
}{}

\NewDocumentEnvironment{interpretation}{+b}{\ifshowresearchboxes
    \ifshowinterpretationboxes
      \begin{tcolorbox}[
        breakable,
        colback=green!4,
        colframe=green!35!black,
        title=Interpretation,
        fonttitle=\bfseries
      ]
      #1
      \end{tcolorbox}
    \fi
  \fi
}{}

\NewDocumentEnvironment{verification}{+b}{\ifshowresearchboxes
    \ifshowverificationboxes
      \begin{tcolorbox}[
        breakable,
        colback=gray!5,
        colframe=gray!50,
        title=Verification,
        fonttitle=\bfseries
      ]
      #1
      \end{tcolorbox}
    \fi
  \fi
}{}

\theoremstyle{plain}
\numberwithin{equation}{section}
\newtheorem{theorem}{Theorem}[section]
\numberwithin{theorem}{section}

\newtheorem{lemma}[theorem]{Lemma}
\newtheorem{proposition}[theorem]{Proposition}

\theoremstyle{definition}
\newtheorem{definition}[theorem]{Definition}
\newtheorem{remark}[theorem]{Remark}

\newcounter{deferredproofanchor}[section]

\crefalias{deferredproofanchor}{section}

\newcommand{\DeferredProofNotice}[1]{\par\noindent
  \ifusedeferredproofs
    Proof deferred to \hyperref[#1]{\Cref*{#1} (\nameref*{#1})}.\else
    Proof deferred to \hyperref[#1]{\Cref*{#1}}.\fi
  \par
}
\newcommand{\DeferredProofAnchor}[1]{\refstepcounter{deferredproofanchor}\label{#1}}
\ifusedeferredproofs
  \NewDocumentEnvironment{deferredtheorem}{o +b}{\IfNoValueTF{#1}
      {\begin{theoremE}[][proof at the end, restate, category=deferred]}{\begin{theoremE}[#1][proof at the end, restate, category=deferred]}#2\end{theoremE}}{}
  \NewDocumentEnvironment{deferredlemma}{o +b}{\IfNoValueTF{#1}
      {\begin{lemmaE}[][proof at the end, restate, category=deferred]}{\begin{lemmaE}[#1][proof at the end, restate, category=deferred]}#2\end{lemmaE}}{}
  \NewDocumentEnvironment{deferredproposition}{o +b}{\IfNoValueTF{#1}
      {\begin{propositionE}[][proof at the end, restate, category=deferred]}{\begin{propositionE}[#1][proof at the end, restate, category=deferred]}#2\end{propositionE}}{}
  \NewDocumentEnvironment{deferredproof}{o +b}{\IfNoValueF{#1}{\DeferredProofNotice{#1}}\IfNoValueTF{#1}{\begin{textAtEnd}[category=deferred]
\begin{proof}
#2
\end{proof}
      \end{textAtEnd}
    }{\eraseIfNeeded{deferred}\appendtofile{\pratendGeneratePrefixFile{\jobname}deferred.tex}{\string\makeatletter\string\Hy@SaveLastskip \string\phantomsection\string\label{#1}\string\Hy@RestoreLastskip\string\makeatother \string\begin{proof}\detokenize{#2}\string\end{proof}}}}{}
  \NewDocumentEnvironment{deferredproofsubsection}{m m +b}{\DeferredProofNotice{#1}\begin{textAtEnd}[category=deferredproofsections]
\subsection{#2}
\label{#1}
\begin{proof}
#3
\end{proof}
    \end{textAtEnd}
  }{}
  \newcommand{\PrintDeferredProofs}{\IfFileExists{\pratendGeneratePrefixFile{\jobname}deferred.tex}{\printProofs[deferred]}{}}
  \newcommand{\PrintDeferredProofSections}{\IfFileExists{\pratendGeneratePrefixFile{\jobname}deferredproofsections.tex}{\printProofs[deferredproofsections]}{}}
\else
  \NewDocumentEnvironment{deferredtheorem}{o +b}{\IfNoValueTF{#1}{\begin{theorem}}{\begin{theorem}[#1]}#2\end{theorem}}{}
  \NewDocumentEnvironment{deferredlemma}{o +b}{\IfNoValueTF{#1}{\begin{lemma}}{\begin{lemma}[#1]}#2\end{lemma}}{}
  \NewDocumentEnvironment{deferredproposition}{o +b}{\IfNoValueTF{#1}{\begin{proposition}}{\begin{proposition}[#1]}#2\end{proposition}}{}
  \NewDocumentEnvironment{deferredproof}{o +b}{\IfNoValueTF{#1}{\begin{proof}
        #2\end{proof}
    }{\begin{proof}
        \DeferredProofAnchor{#1}#2\end{proof}
    }}{}
  \NewDocumentEnvironment{deferredproofsubsection}{m m +b}{\subsection{#2}\label{#1}
    \begin{proof}
      #3\end{proof}
  }{}
  \newcommand{\PrintDeferredProofs}{}
  \newcommand{\PrintDeferredProofSections}{}
\fi

\makeatletter
\def\mathcolor#1#{\@mathcolor{#1}}
\def\@mathcolor#1#2#3{\protect\leavevmode
  \begingroup
    \color#1{#2}#3\endgroup
}
\makeatother

\makeatletter
\newenvironment{proof*}[1][\proofname]{\par
  \pushQED{\qed}\normalfont \partopsep=\z@skip \topsep=\z@skip
  \trivlist
  \item[\hskip\labelsep
        \itshape
    #1\@addpunct{.}]\ignorespaces
}{\popQED\endtrivlist\@endpefalse
}
\makeatother

\makeatletter
\AtBeginDocument{

  \@ifpackageloaded{cleveref}{\crefname{theorem}{Theorem}{Theorems}
    \Crefname{theorem}{Theorem}{Theorems}
    \crefname{corollary}{Corollary}{Corollaries}
    \Crefname{corollary}{Corollary}{Corollaries}
    \crefname{conjecture}{Conjecture}{Conjectures}
    \Crefname{conjecture}{Conjecture}{Conjectures}
    \crefname{assumption}{Assumption}{Assumptions}
    \Crefname{assumption}{Assumption}{Assumptions}
    \crefname{lemma}{Lemma}{Lemmas}
    \Crefname{lemma}{Lemma}{Lemmas}
    \crefname{proposition}{Proposition}{Propositions}
    \Crefname{proposition}{Proposition}{Propositions}
    \crefname{claim}{Claim}{Claims}
    \Crefname{claim}{Claim}{Claims}
    \crefname{fact}{Fact}{Facts}
    \Crefname{fact}{Fact}{Facts}
    \crefname{definition}{Definition}{Definitions}
    \Crefname{definition}{Definition}{Definitions}
    \crefname{remark}{Remark}{Remarks}
    \Crefname{remark}{Remark}{Remarks}
    \crefname{example}{Example}{Examples}
    \Crefname{example}{Example}{Examples}
    \crefname{observation}{Observation}{Observations}
    \Crefname{observation}{Observation}{Observations}
    \crefname{maintheorem}{Main Theorem}{Main Theorems}
    \Crefname{maintheorem}{Main Theorem}{Main Theorems}
  }{}
}
\makeatother
 \HideAllResearchBoxes
\ShowChanges

\usepackage[numbers,square,comma,sort&compress]{natbib}
\usepackage[margin=1in]{geometry}
\usepackage{charter}
\usepackage{tabularx}
\usepackage{graphicx}

\title{Bounded-Support Additive Latin Transversals 
} \author{Antoine Deza\\
  McMaster University, Hamilton, Ontario, Canada\\
  \href{mailto:deza@mcmaster.ca}{\texttt{deza@mcmaster.ca}}
  \and
  Yan Gerard\\
  Universit\'e Clermont Auvergne, LIMOS, France\\
  \href{mailto:yan.gerard@uca.fr}{\texttt{yan.gerard@uca.fr}}
  \and
  Yijun Ma\\
  McMaster University, Hamilton, Ontario, Canada\\
  \href{mailto:yijun@mcmaster.ca}{\texttt{yijun@mcmaster.ca}}
  \and
  Sebastian Pokutta\\
  Zuse Institute Berlin and Technische Universit\"at Berlin\\
  \href{mailto:pokutta@zib.de}{\texttt{pokutta@zib.de}}
}
\date{\today}
 
\makeatletter\HideClarify
\begin{document}

\maketitle

\begin{abstract}
Let $A=(a_1,\ldots,a_k)\in(\mathbb Z_m)^k$, let $B\subseteq\mathbb Z_m$ have cardinality $k$, and write $s=|\operatorname{supp}(A)|$. The additive-transversal problem seeks an ordering $b_1,\ldots,b_k$ of $B$ such that the sums $a_i+b_i$ are pairwise distinct. We reduce this problem to exact-weight perfect matching. The support values of $A$ define edge colors in a bipartite graph, and the prescribed multiplicities of all but one color are encoded as a single target weight in base $k+1$. Applying the exact-weight matching algorithm of Lassota, {\L}ukasiewicz, and Polak~\cite[Theorem~2]{lassotaLukasiewiczPolak2022exactMatching} yields a randomized algorithm with running time $(k+1)^{s-1}\operatorname{poly}(k,s,\log m)$.
Thus the problem is solvable in randomized polynomial time for every fixed support size $s$. The reduction applies over arbitrary cyclic groups and does not use the primality assumption appearing in the motivating existence theorem of Alon.
\end{abstract}

\section{Introduction}

In a January 2022 blog post, Eppstein highlighted the following open algorithmic
question from a talk of Alon~\cite{eppstein2022open}. Fix a prime $p$, a
sequence $A=(a_1,\dots,a_k)\in {(\Z_p)}^k$ with $k<p$, and a $k$-subset
$B\subseteq \Z_p$. Can one order the elements of $B$ as $b_1,\dots,b_k$ so
that the sums $a_i+b_i$ are pairwise distinct? Alon proved that such an
ordering always exists~\cite{alon2000additive}, but no polynomial-time construction is known.
Consider the general case of an arbitrary cyclic group $\Z _m$. Over $\mathbb R$, pairwise distinct sums can be obtained by ordering the
elements of $A$ and $B$ increasingly.
The full-size case where $k=m$ is governed by Hall's theorem~\cite{Hall1952}\footnote{Hall's theorem appears in the mathematics of juggling: in the language of periodic juggling patterns, it provides a necessary and sufficient condition for the feasibility of a prescribed throw sequence.}: For a sequence $A=(a_1,\dots,a_m)\in(\mathbb Z_m)^m$, there is an ordering $b_1,\dots,b_m$ of all elements of $\mathbb Z_m$ such that the sums $a_i+b_i$ are pairwise distinct if and only if $\sum_{i=1}^m a_i\equiv 0 \pmod m$.
Hall's theorem extends to the case when $A$ and $B$ have size $k=m-1$. Hence the cases $k=m$ and $k=m-1$ are completely characterized with the important remark that the solutions are constructible in polynomial time~\cite{Hall1952}.
Apart from the brute-force polynomiality for fixed $k$, the cases $k=m$ and $k=m-1$ are the only unrestricted regimes for which a polynomial-time construction is known.

Let $\supp(A)=\{\alpha_0,\dots,\alpha_{s-1}\}$ be the set of distinct values appearing in $A$. Our principal contribution is a direct reduction of bounded-support additive transversals to exact-weight perfect matching. Noticing that all but one of the support multiplicities can be encoded as digits of a single target weight in base $k+1$, and applying the exact-weight matching algorithm of Lassota, {\L}ukasiewicz, and Polak~\cite[Theorem~2]{lassotaLukasiewiczPolak2022exactMatching} gives randomized search time
$(k+1)^{s-1}\operatorname{poly}(k,s,\log m)$. Thus, additive transversals are randomized polynomial-time constructible for every fixed support size.
We use \emph{Color-Counted Matching} as an interface for the reduction: given an edge-colored graph and target counts $r_0,\ldots,r_{h-1}$, find a matching containing exactly $r_i$ edges of color $i$. The additive-transversal problem has a small, direct matching representation to which the known exact-weight algorithm applies. A more self-contained route through Exact Red Matching and the algorithm of Mulmuley, Vazirani, and Vazirani~\cite{mulmuley1987matching} is also presented.

\paragraph{Organization.}
Section~\ref{sec:related} recalls the relevant background on additive
transversals and exact matching. Section~\ref{sec:main-results} states the
main results on additive transversals and Color-Counted Matching, together
with several applications. Section~\ref{sec:proof-bounded-support} proves
Theorem~\ref{thm:bounded-support}. Finally, Section~\ref{sec:self-contained-ccm} establishes
Theorem~\ref{thm:ccm}, a self-contained alternative to the exact-weight
Color-Counted Matching bound.

\section{Related Results}\label{sec:related}

Latin squares and their transversals have a long history in combinatorics. In
the additive setting over a cyclic group, Hall's theorem~\cite{Hall1952} gives a complete characterization in the
full-size case. Namely, for a sequence
$A=(a_1,\dots,a_m)\in(\Z_m)^m$, there exists an ordering
$b_1,\dots,b_m$ of all elements of $\Z_m$ such that the sums $a_i+b_i$ are
pairwise distinct if and only if $\sum_{i=1}^m a_i \equiv 0 \pmod m$.
Moreover, Hall's theorem yields a polynomial-time construction. The near-full
case $k=m-1$ also follows from Hall's theorem and is likewise polynomial-time
constructible.

Alon~\cite{alon2000additive} proved in 2000 that, for prime $p$, every sequence
$a_1,\dots,a_k\in \Z_p$ with $k<p$ and every $k$-subset $B\subseteq \Z_p$
admits an ordering $b_1,\dots,b_k$ such that the sums $a_i+b_i$ are pairwise
distinct. His proof uses the Combinatorial Nullstellensatz applied to the
polynomial
\[
 \prod_{1\le i<j\le k}(x_i-x_j)(a_i+x_i-a_j-x_j).
\]
Dasgupta, K{\'a}rolyi, Serra, and Szegedy extended this existence phenomenon
to $\Z_{p^\alpha}$ and to $(\Z_p)^\alpha$
\cite{dasgupta2001transversals}. Gao and Wang later gave an alternative proof
using group rings~\cite{gao2004group}. These results provide a strong
existence theory, but they do not give a polynomial-time construction for the
general repeated-sequence problem. Apart from the brute-force polynomiality
when $k=O(1)$ and the full or near-full cases $k=m$ and $k=m-1$, no general
polynomial-time construction is known.

Alon's and Hall's additive-transversal theorems are naturally connected with discrete tomography.
In discrete tomography one studies inverse problems in which a finite set of grid points, or equivalently a binary matrix, is reconstructed from prescribed line sums in several directions; these line sums are usually called X-rays~\cite{HermanKuba1999,GardnerGritzmann1997,GardnerGritzmannPrangenberg1999}.

From this viewpoint, Hall's theorem on abelian groups~\cite{Hall1952}, Del Lungo's reconstruction theorem for sums of permutation matrices from diagonal sums~\cite{DelLungo2002}, and Alon's theorem can all be interpreted as periodic X-ray consistency results for permutation-type incidence structures.
The chosen pairs form a periodic binary incidence matrix: one projection
prescribes the selected elements of $B$, another projection prescribes the
multiset $A$, and the all-different condition asks that the third projection
be binary. This geometric viewpoint, which links our reconstruction problem to
discrete tomography, is illustrated in
Fig.~\ref{fig:tomo-colored-matching}.
In this illustration, the multiset is
$A=(0,2,2,3,5)$ and the set is $B=\{0,1,4,5,6\}$ in $\mathbb Z_7$.
Rows are indexed by elements of $\mathbb Z_7$ and encode the choice of
elements of $B$. A row indexed by $b\in B$ must contain one selected cell, and
a row indexed by $b\notin B$ must contain none. Columns encode the
multiplicities in $A$. The column indexed by $a$ must contain as many selected cells as the
multiplicity of $a$ in $A$. Thus, in the example, the column indexed by $0$
contains one selected cell, since $0$ has multiplicity $1$ in $A$. The column
indexed by $1$ contains no selected cell, since $1$ has multiplicity $0$ in
$A$ and the column indexed by $2$ contains two selected cells, since $2$ has
multiplicity $2$ in $A$.
Finally, cyclic diagonals encode sums modulo $7$. Two selected cells lie on the
same diagonal precisely when the corresponding sums $a+b$ are equal. Hence the
additive-transversal condition is exactly the requirement that the selected
cells occupy distinct  diagonals.

\begin{figure}[htb]
 \centering
 \resizebox{\textwidth}{!}{

\definecolor{tomoPink}{RGB}{255,205,205}
\definecolor{tomoPinkLight}{RGB}{255,235,235}
\definecolor{tomoYellow}{RGB}{255,192,0}
\definecolor{tomoYellowLight}{RGB}{255,238,165}
\definecolor{tomoGreen}{RGB}{146,208,80}
\definecolor{tomoGreenLight}{RGB}{231,245,217}
\definecolor{tomoLavender}{RGB}{220,197,237}
\definecolor{tomoLavenderLight}{RGB}{246,239,251}
\definecolor{tomoSlate}{RGB}{173,185,202}
\definecolor{tomoSlateLight}{RGB}{222,228,236}
\definecolor{tomoRed}{RGB}{255,151,151}
\definecolor{tomoRedLight}{RGB}{255,215,215}
\definecolor{tomoSideRed}{RGB}{255,118,118}
\definecolor{tomoSideRedLight}{RGB}{255,190,190}
\definecolor{tomoBlue}{RGB}{68,114,196}
\definecolor{tomoBlueLight}{RGB}{170,195,235}
\definecolor{tomoSelector}{RGB}{0,0,0}
\definecolor{tomoSelectorLight}{RGB}{95,95,95}
\definecolor{tomoLightGray}{RGB}{192,192,192}
\definecolor{tomoLightGrayLight}{RGB}{245,245,245}
\definecolor{tomoPanelGray}{RGB}{166,166,166}
\definecolor{tomoGridGray}{RGB}{166,166,166}

\begin{tikzpicture}[x=1pt,y=-1pt,line cap=round,line join=round]
  \path[use as bounding box] (0,0) rectangle (2057,661);
  \fill[white] (0,0) rectangle (2057,661);
  \tikzset{
    tomoBallNumber/.style={font=\bfseries\sffamily\fontsize{18}{18}\selectfont, text=white, inner sep=0pt},
    tomoGridNumber/.style={font=\bfseries\sffamily\fontsize{18}{18}\selectfont, text=black, inner sep=0pt}
  }

\fill[tomoPanelGray] (144,231) ellipse[x radius=94,y radius=117];
  \fill[tomoPanelGray] (146,488) ellipse[x radius=94,y radius=117];
\node[font=\bfseries\sffamily\fontsize{46}{46}\selectfont, text=tomoBlue, inner sep=0pt] at (52,116) {A};
  \node[font=\bfseries\sffamily\fontsize{46}{46}\selectfont, text=tomoSideRed, inner sep=0pt] at (35,403) {B};

\shade[inner color=tomoPinkLight, outer color=tomoPink] (484,143) circle[radius=24.5];
  \shade[inner color=tomoYellowLight, outer color=tomoYellow] (549,145) circle[radius=24.5];
  \shade[inner color=tomoGreenLight, outer color=tomoGreen] (615,145) circle[radius=24.5];
  \shade[inner color=tomoSlateLight, outer color=tomoSlate] (743,145) circle[radius=24.5];
  \shade[inner color=tomoLavenderLight, outer color=tomoLavender] (677,146) circle[radius=24.5];
  \shade[inner color=tomoRedLight, outer color=tomoRed] (807,146) circle[radius=24.5];
  \shade[inner color=tomoLightGrayLight, outer color=tomoLightGray] (872,146) circle[radius=24.5];
  \shade[inner color=tomoYellowLight, outer color=tomoYellow] (483,209) circle[radius=24.5];
  \shade[inner color=tomoGreenLight, outer color=tomoGreen] (549,211) circle[radius=24.5];
  \shade[inner color=tomoLavenderLight, outer color=tomoLavender] (614,211) circle[radius=24.5];
  \shade[inner color=tomoRedLight, outer color=tomoRed] (742,211) circle[radius=24.5];
  \shade[inner color=tomoLightGrayLight, outer color=tomoLightGray] (806,211) circle[radius=24.5];
  \shade[inner color=tomoSlateLight, outer color=tomoSlate] (676,212) circle[radius=24.5];
  \shade[inner color=tomoPinkLight, outer color=tomoPink] (871,213) circle[radius=24.5];
  \shade[inner color=tomoGreenLight, outer color=tomoGreen] (483,278) circle[radius=24.5];
  \shade[inner color=tomoLavenderLight, outer color=tomoLavender] (548,279) circle[radius=24.5];
  \shade[inner color=tomoSlateLight, outer color=tomoSlate] (613,279) circle[radius=24.5];
  \shade[inner color=tomoLightGrayLight, outer color=tomoLightGray] (741,279) circle[radius=24.5];
  \shade[inner color=tomoYellowLight, outer color=tomoYellow] (868,279) circle[radius=24.5];
  \shade[inner color=tomoRedLight, outer color=tomoRed] (675,280) circle[radius=24.5];
  \shade[inner color=tomoPinkLight, outer color=tomoPink] (806,280) circle[radius=24.5];
  \shade[inner color=tomoGreenLight, outer color=tomoGreen] (869,345) circle[radius=24.5];
  \shade[inner color=tomoLavenderLight, outer color=tomoLavender] (483,347) circle[radius=24.5];
  \shade[inner color=tomoRedLight, outer color=tomoRed] (614,347) circle[radius=24.5];
  \shade[inner color=tomoPinkLight, outer color=tomoPink] (741,347) circle[radius=24.5];
  \shade[inner color=tomoSlateLight, outer color=tomoSlate] (549,349) circle[radius=24.5];
  \shade[inner color=tomoLightGrayLight, outer color=tomoLightGray] (676,349) circle[radius=24.5];
  \shade[inner color=tomoYellowLight, outer color=tomoYellow] (805,349) circle[radius=24.5];
  \shade[inner color=tomoYellowLight, outer color=tomoYellow] (741,415) circle[radius=24.5];
  \shade[inner color=tomoGreenLight, outer color=tomoGreen] (805,415) circle[radius=24.5];
  \shade[inner color=tomoSlateLight, outer color=tomoSlate] (483,416) circle[radius=24.5];
  \shade[inner color=tomoPinkLight, outer color=tomoPink] (677,416) circle[radius=24.5];
  \shade[inner color=tomoLavenderLight, outer color=tomoLavender] (868,416) circle[radius=24.5];
  \shade[inner color=tomoRedLight, outer color=tomoRed] (548,417) circle[radius=24.5];
  \shade[inner color=tomoLightGrayLight, outer color=tomoLightGray] (613,417) circle[radius=24.5];
  \shade[inner color=tomoYellowLight, outer color=tomoYellow] (675,478) circle[radius=24.5];
  \shade[inner color=tomoSlateLight, outer color=tomoSlate] (871,478) circle[radius=24.5];
  \shade[inner color=tomoRedLight, outer color=tomoRed] (483,479) circle[radius=24.5];
  \shade[inner color=tomoLavenderLight, outer color=tomoLavender] (805,479) circle[radius=24.5];
  \shade[inner color=tomoLightGrayLight, outer color=tomoLightGray] (548,481) circle[radius=24.5];
  \shade[inner color=tomoGreenLight, outer color=tomoGreen] (741,481) circle[radius=24.5];
  \shade[inner color=tomoPinkLight, outer color=tomoPink] (613,482) circle[radius=24.5];
  \shade[inner color=tomoGreenLight, outer color=tomoGreen] (675,543) circle[radius=24.5];
  \shade[inner color=tomoLavenderLight, outer color=tomoLavender] (739,543) circle[radius=24.5];
  \shade[inner color=tomoSlateLight, outer color=tomoSlate] (805,544) circle[radius=24.5];
  \shade[inner color=tomoRedLight, outer color=tomoRed] (869,544) circle[radius=24.5];
  \shade[inner color=tomoLightGrayLight, outer color=tomoLightGray] (482,545) circle[radius=24.5];
  \shade[inner color=tomoYellowLight, outer color=tomoYellow] (613,545) circle[radius=24.5];
  \shade[inner color=tomoPinkLight, outer color=tomoPink] (548,547) circle[radius=24.5];
  \shade[inner color=tomoBlueLight, outer color=tomoBlue] (110,172) circle[radius=24.5];
  \shade[inner color=tomoBlueLight, outer color=tomoBlue] (172,185) circle[radius=24.5];
  \shade[inner color=tomoBlueLight, outer color=tomoBlue] (99,251) circle[radius=24.5];
  \shade[inner color=tomoBlueLight, outer color=tomoBlue] (172,260) circle[radius=24.5];
  \shade[inner color=tomoBlueLight, outer color=tomoBlue] (134,309) circle[radius=24.5];
  \shade[inner color=tomoSideRedLight, outer color=tomoSideRed] (114,429) circle[radius=24.5];
  \shade[inner color=tomoSideRedLight, outer color=tomoSideRed] (177,442) circle[radius=24.5];
  \shade[inner color=tomoSideRedLight, outer color=tomoSideRed] (102,508) circle[radius=24.5];
  \shade[inner color=tomoSideRedLight, outer color=tomoSideRed] (177,515) circle[radius=24.5];
  \shade[inner color=tomoSideRedLight, outer color=tomoSideRed] (139,565) circle[radius=24.5];
  \shade[inner color=tomoSelectorLight, outer color=tomoSelector] (1161,143) circle[radius=24.5];
  \shade[inner color=tomoYellowLight, outer color=tomoYellow] (1227,145) circle[radius=24.5];
  \shade[inner color=tomoGreenLight, outer color=tomoGreen] (1292,145) circle[radius=24.5];
  \shade[inner color=tomoSlateLight, outer color=tomoSlate] (1420,145) circle[radius=24.5];
  \shade[inner color=tomoRedLight, outer color=tomoRed] (1484,145) circle[radius=24.5];
  \shade[inner color=tomoLightGrayLight, outer color=tomoLightGray] (1549,145) circle[radius=24.5];
  \shade[inner color=tomoLavenderLight, outer color=tomoLavender] (1354,146) circle[radius=24.5];
  \shade[inner color=tomoYellowLight, outer color=tomoYellow] (1161,208) circle[radius=24.5];
  \shade[inner color=tomoGreenLight, outer color=tomoGreen] (1226,209) circle[radius=24.5];
  \shade[inner color=tomoSelectorLight, outer color=tomoSelector] (1291,209) circle[radius=24.5];
  \shade[inner color=tomoRedLight, outer color=tomoRed] (1419,209) circle[radius=24.5];
  \shade[inner color=tomoSlateLight, outer color=tomoSlate] (1353,211) circle[radius=24.5];
  \shade[inner color=tomoLightGrayLight, outer color=tomoLightGray] (1483,211) circle[radius=24.5];
  \shade[inner color=tomoPinkLight, outer color=tomoPink] (1547,212) circle[radius=24.5];
  \shade[inner color=tomoGreenLight, outer color=tomoGreen] (1160,278) circle[radius=24.5];
  \shade[inner color=tomoLavenderLight, outer color=tomoLavender] (1225,278) circle[radius=24.5];
  \shade[inner color=tomoLightGrayLight, outer color=tomoLightGray] (1418,278) circle[radius=24.5];
  \shade[inner color=tomoYellowLight, outer color=tomoYellow] (1545,278) circle[radius=24.5];
  \shade[inner color=tomoSlateLight, outer color=tomoSlate] (1289,279) circle[radius=24.5];
  \shade[inner color=tomoRedLight, outer color=tomoRed] (1353,279) circle[radius=24.5];
  \shade[inner color=tomoPinkLight, outer color=tomoPink] (1483,280) circle[radius=24.5];
  \shade[inner color=tomoGreenLight, outer color=tomoGreen] (1546,345) circle[radius=24.5];
  \shade[inner color=tomoLavenderLight, outer color=tomoLavender] (1160,346) circle[radius=24.5];
  \shade[inner color=tomoPinkLight, outer color=tomoPink] (1418,346) circle[radius=24.5];
  \shade[inner color=tomoSlateLight, outer color=tomoSlate] (1226,347) circle[radius=24.5];
  \shade[inner color=tomoRedLight, outer color=tomoRed] (1291,347) circle[radius=24.5];
  \shade[inner color=tomoLightGrayLight, outer color=tomoLightGray] (1353,349) circle[radius=24.5];
  \shade[inner color=tomoYellowLight, outer color=tomoYellow] (1483,349) circle[radius=24.5];
  \shade[inner color=tomoGreenLight, outer color=tomoGreen] (1481,413) circle[radius=24.5];
  \shade[inner color=tomoSlateLight, outer color=tomoSlate] (1160,415) circle[radius=24.5];
  \shade[inner color=tomoPinkLight, outer color=tomoPink] (1353,415) circle[radius=24.5];
  \shade[inner color=tomoYellowLight, outer color=tomoYellow] (1418,415) circle[radius=24.5];
  \shade[inner color=tomoRedLight, outer color=tomoRed] (1225,416) circle[radius=24.5];
  \shade[inner color=tomoSelectorLight, outer color=tomoSelector] (1289,416) circle[radius=24.5];
  \shade[inner color=tomoLavenderLight, outer color=tomoLavender] (1545,416) circle[radius=24.5];
  \shade[inner color=tomoSelectorLight, outer color=tomoSelector] (1351,477) circle[radius=24.5];
  \shade[inner color=tomoSlateLight, outer color=tomoSlate] (1547,477) circle[radius=24.5];
  \shade[inner color=tomoRedLight, outer color=tomoRed] (1160,478) circle[radius=24.5];
  \shade[inner color=tomoLavenderLight, outer color=tomoLavender] (1481,478) circle[radius=24.5];
  \shade[inner color=tomoLightGrayLight, outer color=tomoLightGray] (1225,479) circle[radius=24.5];
  \shade[inner color=tomoGreenLight, outer color=tomoGreen] (1418,479) circle[radius=24.5];
  \shade[inner color=tomoPinkLight, outer color=tomoPink] (1289,481) circle[radius=24.5];
  \shade[inner color=tomoGreenLight, outer color=tomoGreen] (1352,543) circle[radius=24.5];
  \shade[inner color=tomoLavenderLight, outer color=tomoLavender] (1415,543) circle[radius=24.5];
  \shade[inner color=tomoSelectorLight, outer color=tomoSelector] (1481,543) circle[radius=24.5];
  \shade[inner color=tomoRedLight, outer color=tomoRed] (1545,543) circle[radius=24.5];
  \shade[inner color=tomoLightGrayLight, outer color=tomoLightGray] (1159,544) circle[radius=24.5];
  \shade[inner color=tomoYellowLight, outer color=tomoYellow] (1289,544) circle[radius=24.5];
  \shade[inner color=tomoPinkLight, outer color=tomoPink] (1225,547) circle[radius=24.5];
  \shade[inner color=tomoBlueLight, outer color=tomoBlue] (1825,271) circle[radius=24.5];
  \shade[inner color=tomoBlueLight, outer color=tomoBlue] (1887,271) circle[radius=24.5];
  \shade[inner color=tomoBlueLight, outer color=tomoBlue] (1951,271) circle[radius=24.5];
  \shade[inner color=tomoBlueLight, outer color=tomoBlue] (2014,271) circle[radius=24.5];
  \shade[inner color=tomoBlueLight, outer color=tomoBlue] (1756,272) circle[radius=24.5];
  \shade[inner color=tomoSideRedLight, outer color=tomoSideRed] (1825,337) circle[radius=24.5];
  \shade[inner color=tomoSideRedLight, outer color=tomoSideRed] (1952,337) circle[radius=24.5];
  \shade[inner color=tomoSideRedLight, outer color=tomoSideRed] (1888,338) circle[radius=24.5];
  \shade[inner color=tomoSideRedLight, outer color=tomoSideRed] (2017,338) circle[radius=24.5];
  \shade[inner color=tomoSideRedLight, outer color=tomoSideRed] (1757,339) circle[radius=24.5];
  \shade[inner color=tomoSelectorLight, outer color=tomoSelector] (1952,434) circle[radius=24.5];
  \shade[inner color=tomoSelectorLight, outer color=tomoSelector] (1756,435) circle[radius=24.5];
  \shade[inner color=tomoSelectorLight, outer color=tomoSelector] (1887,435) circle[radius=24.5];
  \shade[inner color=tomoSelectorLight, outer color=tomoSelector] (2018,435) circle[radius=24.5];
  \shade[inner color=tomoSelectorLight, outer color=tomoSelector] (1826,436) circle[radius=24.5];

\draw[tomoGridGray,line width=4pt] (515.5,110) -- (515.5,579);
  \draw[tomoGridGray,line width=4pt] (580.5,110) -- (580.5,579);
  \draw[tomoGridGray,line width=4pt] (644.5,110) -- (644.5,579);
  \draw[tomoGridGray,line width=4pt] (709.5,110) -- (709.5,579);
  \draw[tomoGridGray,line width=4pt] (773.5,110) -- (773.5,579);
  \draw[tomoGridGray,line width=4pt] (837.5,110) -- (837.5,579);
  \draw[tomoGridGray,line width=4pt] (447,177.5) -- (905,177.5);
  \draw[tomoGridGray,line width=4pt] (447,246.5) -- (905,246.5);
  \draw[tomoGridGray,line width=4pt] (447,313.5) -- (905,313.5);
  \draw[tomoGridGray,line width=4pt] (447,382.5) -- (905,382.5);
  \draw[tomoGridGray,line width=4pt] (447,447.5) -- (905,447.5);
  \draw[tomoGridGray,line width=4pt] (447,511.5) -- (905,511.5);
  \draw[black,line width=4pt] (447,110) rectangle (905,579);
  \draw[tomoGridGray,line width=4pt] (1192.5,109) -- (1192.5,579);
  \draw[tomoGridGray,line width=4pt] (1257.5,109) -- (1257.5,579);
  \draw[tomoGridGray,line width=4pt] (1321.5,109) -- (1321.5,579);
  \draw[tomoGridGray,line width=4pt] (1386.0,109) -- (1386.0,579);
  \draw[tomoGridGray,line width=4pt] (1450.5,109) -- (1450.5,579);
  \draw[tomoGridGray,line width=4pt] (1514.5,109) -- (1514.5,579);
  \draw[tomoGridGray,line width=4pt] (1124,176.5) -- (1582,176.5);
  \draw[tomoGridGray,line width=4pt] (1124,245.5) -- (1582,245.5);
  \draw[tomoGridGray,line width=4pt] (1124,312.5) -- (1582,312.5);
  \draw[tomoGridGray,line width=4pt] (1124,381.5) -- (1582,381.5);
  \draw[tomoGridGray,line width=4pt] (1124,447.5) -- (1582,447.5);
  \draw[tomoGridGray,line width=4pt] (1124,511.5) -- (1582,511.5);
  \draw[black,line width=4pt] (1124,109) rectangle (1582,579);

\draw[black,line width=5pt] (1725,388) -- (2038,388);

\path[fill=tomoBlue] (480,589) -- (489,600) -- (484,600) -- (484,612) -- (476,612) -- (476,600) -- (471,600) -- cycle;
  \node[font=\bfseries\sffamily\fontsize{17}{17}\selectfont, text=tomoBlue, inner sep=0pt] at (480,634) {1};
  \path[fill=tomoGridGray] (546,589) -- (555,600) -- (550,600) -- (550,612) -- (542,612) -- (542,600) -- (537,600) -- cycle;
  \node[font=\bfseries\sffamily\fontsize{17}{17}\selectfont, text=black, inner sep=0pt] at (546,634) {0};
  \path[fill=tomoBlue] (612,589) -- (621,600) -- (616,600) -- (616,612) -- (608,612) -- (608,600) -- (603,600) -- cycle;
  \node[font=\bfseries\sffamily\fontsize{17}{17}\selectfont, text=tomoBlue, inner sep=0pt] at (612,634) {2};
  \path[fill=tomoBlue] (676,589) -- (685,600) -- (680,600) -- (680,612) -- (672,612) -- (672,600) -- (667,600) -- cycle;
  \node[font=\bfseries\sffamily\fontsize{17}{17}\selectfont, text=tomoBlue, inner sep=0pt] at (676,634) {1};
  \path[fill=tomoGridGray] (740,589) -- (749,600) -- (744,600) -- (744,612) -- (736,612) -- (736,600) -- (731,600) -- cycle;
  \node[font=\bfseries\sffamily\fontsize{17}{17}\selectfont, text=black, inner sep=0pt] at (740,634) {0};
  \path[fill=tomoBlue] (804,589) -- (813,600) -- (808,600) -- (808,612) -- (800,612) -- (800,600) -- (795,600) -- cycle;
  \node[font=\bfseries\sffamily\fontsize{17}{17}\selectfont, text=tomoBlue, inner sep=0pt] at (804,634) {1};
  \path[fill=tomoGridGray] (870,589) -- (879,600) -- (874,600) -- (874,612) -- (866,612) -- (866,600) -- (861,600) -- cycle;
  \node[font=\bfseries\sffamily\fontsize{17}{17}\selectfont, text=black, inner sep=0pt] at (870,634) {0};
  \path[fill=tomoBlue] (1156,589) -- (1165,600) -- (1160,600) -- (1160,612) -- (1152,612) -- (1152,600) -- (1147,600) -- cycle;
  \node[font=\bfseries\sffamily\fontsize{17}{17}\selectfont, text=tomoBlue, inner sep=0pt] at (1156,634) {1};
  \path[fill=tomoGridGray] (1222,589) -- (1231,600) -- (1226,600) -- (1226,612) -- (1218,612) -- (1218,600) -- (1213,600) -- cycle;
  \node[font=\bfseries\sffamily\fontsize{17}{17}\selectfont, text=black, inner sep=0pt] at (1222,634) {0};
  \path[fill=tomoBlue] (1288,589) -- (1297,600) -- (1292,600) -- (1292,612) -- (1284,612) -- (1284,600) -- (1279,600) -- cycle;
  \node[font=\bfseries\sffamily\fontsize{17}{17}\selectfont, text=tomoBlue, inner sep=0pt] at (1288,634) {2};
  \path[fill=tomoBlue] (1354,589) -- (1363,600) -- (1358,600) -- (1358,612) -- (1350,612) -- (1350,600) -- (1345,600) -- cycle;
  \node[font=\bfseries\sffamily\fontsize{17}{17}\selectfont, text=tomoBlue, inner sep=0pt] at (1354,634) {1};
  \path[fill=tomoGridGray] (1417,589) -- (1426,600) -- (1421,600) -- (1421,612) -- (1413,612) -- (1413,600) -- (1408,600) -- cycle;
  \node[font=\bfseries\sffamily\fontsize{17}{17}\selectfont, text=black, inner sep=0pt] at (1417,634) {0};
  \path[fill=tomoBlue] (1482,589) -- (1491,600) -- (1486,600) -- (1486,612) -- (1478,612) -- (1478,600) -- (1473,600) -- cycle;
  \node[font=\bfseries\sffamily\fontsize{17}{17}\selectfont, text=tomoBlue, inner sep=0pt] at (1482,634) {1};
  \path[fill=tomoGridGray] (1547,589) -- (1556,600) -- (1551,600) -- (1551,612) -- (1543,612) -- (1543,600) -- (1538,600) -- cycle;
  \node[font=\bfseries\sffamily\fontsize{17}{17}\selectfont, text=black, inner sep=0pt] at (1547,634) {0};

\node[font=\bfseries\sffamily\fontsize{14}{14}\selectfont, text=tomoSideRed, inner sep=0pt] at (385,155) {1};
  \path[fill=tomoSideRed] (410,149) -- (423,149) -- (423,145) -- (435,154) -- (423,163) -- (423,159) -- (410,159) -- cycle;
  \node[font=\bfseries\sffamily\fontsize{14}{14}\selectfont, text=tomoSideRed, inner sep=0pt] at (385,220) {1};
  \path[fill=tomoSideRed] (410,214) -- (423,214) -- (423,210) -- (435,219) -- (423,228) -- (423,224) -- (410,224) -- cycle;
  \node[font=\bfseries\sffamily\fontsize{14}{14}\selectfont, text=tomoGridGray, inner sep=0pt] at (385,284) {0};
  \path[fill=tomoGridGray] (410,278) -- (423,278) -- (423,274) -- (435,283) -- (423,292) -- (423,288) -- (410,288) -- cycle;
  \node[font=\bfseries\sffamily\fontsize{14}{14}\selectfont, text=tomoGridGray, inner sep=0pt] at (385,349) {0};
  \path[fill=tomoGridGray] (410,343) -- (423,343) -- (423,339) -- (435,348) -- (423,357) -- (423,353) -- (410,353) -- cycle;
  \node[font=\bfseries\sffamily\fontsize{14}{14}\selectfont, text=tomoSideRed, inner sep=0pt] at (385,412) {1};
  \path[fill=tomoSideRed] (410,406) -- (423,406) -- (423,402) -- (435,411) -- (423,420) -- (423,416) -- (410,416) -- cycle;
  \node[font=\bfseries\sffamily\fontsize{14}{14}\selectfont, text=tomoSideRed, inner sep=0pt] at (385,477) {1};
  \path[fill=tomoSideRed] (410,471) -- (423,471) -- (423,467) -- (435,476) -- (423,485) -- (423,481) -- (410,481) -- cycle;
  \node[font=\bfseries\sffamily\fontsize{14}{14}\selectfont, text=tomoSideRed, inner sep=0pt] at (385,545) {1};
  \path[fill=tomoSideRed] (410,539) -- (423,539) -- (423,535) -- (435,544) -- (423,553) -- (423,549) -- (410,549) -- cycle;
  \node[font=\bfseries\sffamily\fontsize{14}{14}\selectfont, text=tomoSideRed, inner sep=0pt] at (1062,155) {1};
  \path[fill=tomoSideRed] (1087,149) -- (1100,149) -- (1100,145) -- (1112,154) -- (1100,163) -- (1100,159) -- (1087,159) -- cycle;
  \node[font=\bfseries\sffamily\fontsize{14}{14}\selectfont, text=tomoSideRed, inner sep=0pt] at (1062,220) {1};
  \path[fill=tomoSideRed] (1087,214) -- (1100,214) -- (1100,210) -- (1112,219) -- (1100,228) -- (1100,224) -- (1087,224) -- cycle;
  \node[font=\bfseries\sffamily\fontsize{14}{14}\selectfont, text=tomoGridGray, inner sep=0pt] at (1062,284) {0};
  \path[fill=tomoGridGray] (1087,278) -- (1100,278) -- (1100,274) -- (1112,283) -- (1100,292) -- (1100,288) -- (1087,288) -- cycle;
  \node[font=\bfseries\sffamily\fontsize{14}{14}\selectfont, text=tomoGridGray, inner sep=0pt] at (1062,349) {0};
  \path[fill=tomoGridGray] (1087,343) -- (1100,343) -- (1100,339) -- (1112,348) -- (1100,357) -- (1100,353) -- (1087,353) -- cycle;
  \node[font=\bfseries\sffamily\fontsize{14}{14}\selectfont, text=tomoSideRed, inner sep=0pt] at (1062,412) {1};
  \path[fill=tomoSideRed] (1087,406) -- (1100,406) -- (1100,402) -- (1112,411) -- (1100,420) -- (1100,416) -- (1087,416) -- cycle;
  \node[font=\bfseries\sffamily\fontsize{14}{14}\selectfont, text=tomoSideRed, inner sep=0pt] at (1062,477) {1};
  \path[fill=tomoSideRed] (1087,471) -- (1100,471) -- (1100,467) -- (1112,476) -- (1100,485) -- (1100,481) -- (1087,481) -- cycle;
  \node[font=\bfseries\sffamily\fontsize{14}{14}\selectfont, text=tomoSideRed, inner sep=0pt] at (1062,545) {1};
  \path[fill=tomoSideRed] (1087,539) -- (1100,539) -- (1100,535) -- (1112,544) -- (1100,553) -- (1100,549) -- (1087,549) -- cycle;

\node[tomoGridNumber] at (484,144) {0};
  \node[tomoBallNumber] at (1161,144) {0};
  \node[tomoGridNumber] at (549,146) {1};
  \node[tomoGridNumber] at (615,146) {2};
  \node[tomoGridNumber] at (743,146) {4};
  \node[tomoGridNumber] at (1227,146) {1};
  \node[tomoGridNumber] at (1292,146) {2};
  \node[tomoGridNumber] at (1420,146) {4};
  \node[tomoGridNumber] at (1484,146) {5};
  \node[tomoGridNumber] at (1549,146) {6};
  \node[tomoGridNumber] at (677,147) {3};
  \node[tomoGridNumber] at (807,147) {5};
  \node[tomoGridNumber] at (872,147) {6};
  \node[tomoGridNumber] at (1354,147) {3};
  \node[tomoGridNumber] at (110,173) {0};
  \node[tomoGridNumber] at (172,186) {2};
  \node[tomoGridNumber] at (1161,209) {1};
  \node[tomoGridNumber] at (483,210) {1};
  \node[tomoGridNumber] at (1226,210) {2};
  \node[tomoBallNumber] at (1291,210) {3};
  \node[tomoGridNumber] at (1419,210) {5};
  \node[tomoGridNumber] at (549,212) {2};
  \node[tomoGridNumber] at (614,212) {3};
  \node[tomoGridNumber] at (742,212) {5};
  \node[tomoGridNumber] at (806,212) {6};
  \node[tomoGridNumber] at (1353,212) {4};
  \node[tomoGridNumber] at (1483,212) {6};
  \node[tomoGridNumber] at (676,213) {4};
  \node[tomoGridNumber] at (1547,213) {0};
  \node[tomoGridNumber] at (871,214) {0};
  \node[tomoGridNumber] at (99,252) {2};
  \node[tomoGridNumber] at (172,261) {3};
  \node[tomoGridNumber] at (1825,272) {2};
  \node[tomoGridNumber] at (1887,272) {2};
  \node[tomoGridNumber] at (1951,272) {3};
  \node[tomoGridNumber] at (2014,272) {5};
  \node[tomoGridNumber] at (1756,273) {0};
  \node[tomoGridNumber] at (483,279) {2};
  \node[tomoGridNumber] at (1160,279) {2};
  \node[tomoGridNumber] at (1225,279) {3};
  \node[tomoGridNumber] at (1418,279) {6};
  \node[tomoGridNumber] at (1545,279) {1};
  \node[tomoGridNumber] at (548,280) {3};
  \node[tomoGridNumber] at (613,280) {4};
  \node[tomoGridNumber] at (741,280) {6};
  \node[tomoGridNumber] at (868,280) {1};
  \node[tomoGridNumber] at (1289,280) {4};
  \node[tomoGridNumber] at (1353,280) {5};
  \node[tomoGridNumber] at (675,281) {5};
  \node[tomoGridNumber] at (806,281) {0};
  \node[tomoGridNumber] at (1483,281) {0};
  \node[tomoGridNumber] at (134,310) {5};
  \node[tomoGridNumber] at (1825,338) {1};
  \node[tomoGridNumber] at (1952,338) {5};
  \node[tomoGridNumber] at (1888,339) {4};
  \node[tomoGridNumber] at (2017,339) {6};
  \node[tomoGridNumber] at (1757,340) {0};
  \node[tomoGridNumber] at (869,346) {2};
  \node[tomoGridNumber] at (1546,346) {2};
  \node[tomoGridNumber] at (1160,347) {3};
  \node[tomoGridNumber] at (1418,347) {0};
  \node[tomoGridNumber] at (483,348) {3};
  \node[tomoGridNumber] at (614,348) {5};
  \node[tomoGridNumber] at (741,348) {0};
  \node[tomoGridNumber] at (1226,348) {4};
  \node[tomoGridNumber] at (1291,348) {5};
  \node[tomoGridNumber] at (549,350) {4};
  \node[tomoGridNumber] at (676,350) {6};
  \node[tomoGridNumber] at (805,350) {1};
  \node[tomoGridNumber] at (1353,350) {6};
  \node[tomoGridNumber] at (1483,350) {1};
  \node[tomoGridNumber] at (1481,414) {2};
  \node[tomoGridNumber] at (741,416) {1};
  \node[tomoGridNumber] at (805,416) {2};
  \node[tomoGridNumber] at (1160,416) {4};
  \node[tomoGridNumber] at (1353,416) {0};
  \node[tomoGridNumber] at (1418,416) {1};
  \node[tomoGridNumber] at (483,417) {4};
  \node[tomoGridNumber] at (677,417) {0};
  \node[tomoGridNumber] at (868,417) {3};
  \node[tomoGridNumber] at (1225,417) {5};
  \node[tomoBallNumber] at (1289,417) {6};
  \node[tomoGridNumber] at (1545,417) {3};
  \node[tomoGridNumber] at (548,418) {5};
  \node[tomoGridNumber] at (613,418) {6};
  \node[tomoGridNumber] at (114,430) {0};
  \node[tomoBallNumber] at (1952,435) {1};
  \node[tomoBallNumber] at (1756,436) {0};
  \node[tomoBallNumber] at (1887,436) {6};
  \node[tomoBallNumber] at (2018,436) {4};
  \node[tomoBallNumber] at (1826,437) {3};
  \node[tomoGridNumber] at (177,443) {1};
  \node[tomoBallNumber] at (1351,478) {1};
  \node[tomoGridNumber] at (1547,478) {4};
  \node[tomoGridNumber] at (675,479) {1};
  \node[tomoGridNumber] at (871,479) {4};
  \node[tomoGridNumber] at (1160,479) {5};
  \node[tomoGridNumber] at (1481,479) {3};
  \node[tomoGridNumber] at (483,480) {5};
  \node[tomoGridNumber] at (805,480) {3};
  \node[tomoGridNumber] at (1225,480) {6};
  \node[tomoGridNumber] at (1418,480) {2};
  \node[tomoGridNumber] at (548,482) {6};
  \node[tomoGridNumber] at (741,482) {2};
  \node[tomoGridNumber] at (1289,482) {0};
  \node[tomoGridNumber] at (613,483) {0};
  \node[tomoGridNumber] at (102,509) {4};
  \node[tomoGridNumber] at (177,516) {5};
  \node[tomoGridNumber] at (675,544) {2};
  \node[tomoGridNumber] at (739,544) {3};
  \node[tomoGridNumber] at (1352,544) {2};
  \node[tomoGridNumber] at (1415,544) {3};
  \node[tomoBallNumber] at (1481,544) {4};
  \node[tomoGridNumber] at (1545,544) {5};
  \node[tomoGridNumber] at (805,545) {4};
  \node[tomoGridNumber] at (869,545) {5};
  \node[tomoGridNumber] at (1159,545) {6};
  \node[tomoGridNumber] at (1289,545) {1};
  \node[tomoGridNumber] at (482,546) {6};
  \node[tomoGridNumber] at (613,546) {1};
  \node[tomoGridNumber] at (548,548) {0};
  \node[tomoGridNumber] at (1225,548) {0};
  \node[tomoGridNumber] at (139,566) {6};

\path[fill=black] (250,335.0) -- (316.3,335.0) -- (316.3,298) -- (357,354.0) -- (316.3,410) -- (316.3,373.0) -- (250,373.0) -- cycle;
  \path[fill=black] (928,331.5) -- (995.0,331.5) -- (995.0,295) -- (1036,350.5) -- (995.0,406) -- (995.0,369.5) -- (928,369.5) -- cycle;
  \path[fill=black] (1608,331.5) -- (1675.0,331.5) -- (1675.0,295) -- (1716,350.5) -- (1675.0,406) -- (1675.0,369.5) -- (1608,369.5) -- cycle;
\end{tikzpicture}
 }
 \caption{An additive Latin transversal instance viewed as a three-direction discrete tomography instance.}
 \label{fig:tomo-colored-matching}
\end{figure}

This perspective explains why matching enters the picture.
Thus the bounded-support additive-transversal problem may be viewed as a particularly structured instance of a broader discrete-tomography theme: reconstruction from three projections when one projection has bounded support.

On the algorithmic side, Color-Counted Matching is a primitive that appears
naturally in the additive-transversal problem but is not specific to it. It
includes Exact Matching as the two-color case, and it captures several
exact-composition variants of matching problems. For instance, if all target
counts are $1$, it includes exact rainbow matching. There is also a direct Color-Counted matching formulation of
3-dimensional matching: a triple $(x,y,z)$ may be represented as an edge $xy$
colored by $z$, and requiring prescribed counts for the $z$-colors enforces
the corresponding constraints in the third coordinate. More generally,
Color-Counted Matching captures colored perfect matchings arising in discrete
tomography, tilings, permutation-matrix reconstruction, and cycle-cover
problems. Several applications of this formulation are described below.

The algorithmic bridge used in this paper is the reduction from
Color-Counted Matching to Exact-Weight Perfect Matching. The latter problem
asks, given a non-negatively weighted multigraph $(V,E)$ and a target value $t$, whether $(V,E)$ has a perfect
matching whose weight sum is $t$. Lassota, {\L}ukasiewicz, and Polak~\cite{lassotaLukasiewiczPolak2022exactMatching}
give a randomized Monte Carlo algorithm for this problem running in time $\widetilde O(t|V|^8+|E|)$.
This is the exact-weight matching engine used in our main reduction.
The broader matching-ILP framework of Lassota and
Ligthart~\cite{lassotaLigthart2025matchingIP} applies to matching
problems with additional linear constraints and therefore also
contains prescribed color counts as a special case. Their
framework gives other randomized XP algorithms parameterized by the number
of additional constraints.

\section{Main Results}\label{sec:main-results}

We now state the two algorithmic problems and the
corresponding complexity bounds.

\begin{definition}Fix integers $m\ge1$ and $1\le k\le m$. An Additive Transversal instance consists of a sequence $A=(a_1,\dots,a_k)\in(\Z_m)^k$ and a set $B\subseteq\Z_m$ of cardinality $k$.
A \emph{solution} is an ordering $b_1,\dots,b_k$ of the elements of $B$
such that the sums
$ a_1+b_1,\dots,a_k+b_k $
are pairwise distinct in $\Z_m$.
\end{definition}

Let $\alpha_0,\dots,\alpha_{s-1}$ be the distinct values appearing in $A$, and let
$m_j$ be the multiplicity of $\alpha _j$ in $A$. Thus
$\sum_{j=0}^{s-1} m_j = k.$
Our main result is a randomized algorithm whose running time is
polynomial for every fixed support size $s$, with the support-dependent factor made explicit.

\begin{theorem}
\label{thm:bounded-support}
There is a randomized algorithm such that, on input an additive transversal
instance over $\Z_m$ with support size $s=|\supp(A)|$, the algorithm either
reports failure or outputs a correct solution. If a solution exists, then the
algorithm outputs a solution with probability at least $2/3$ in time
$(k+1)^{s-1}\poly(k,s,\log m)$.
The failure probability can be reduced by independent repetition. In
particular, the problem is randomized polynomial-time solvable for every fixed
support size $s$.
\end{theorem}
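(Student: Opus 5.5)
We reduce the additive-transversal instance to an exact-weight perfect matching instance and then apply \cite[Theorem~2]{lassotaLukasiewiczPolak2022exactMatching}, whose running time is $\widetilde O(t|V|^8+|E|)$.

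\textbf{Construction.} Build the bipartite graph $G$ with left side $B$ (exactly $k$ vertices) and right side the set of sums $\Z_m$. For each $b\in B$ and each support value $\alpha_j$, add an edge from $b$ to $b+\alpha_j$ and give it color $j$; parallel edges are allowed, since the engine accepts multigraphs. A perfect matching is not directly what we want, because only $k$ of the $m$ sum-vertices should be covered. To handle this, add $m-k$ dummy left vertices, each joined to every sum vertex by an edge of weight $0$. Perfect matchings then correspond to matchings saturating $B$ whose sum-vertices are distinct. The resulting graph has $|V|=2m$, which is polynomial in $m$, not in $\log m$.

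\textbf{Compressing the graph.} To reach $\poly(k,s,\log m)$, the right side must be restricted to the set $S=\{b+\alpha_j: b\in B,\ 0\le j<s\}$ of potential sums, which has $|S|\le ks$. We then add $|S|-k$ dummy left vertices joined to all of $S$ with weight $0$. Arithmetic modulo $m$ costs $O(\log m)$ bit operations, so the graph has $\poly(k,s)$ vertices and edges and is built in $\poly(k,s,\log m)$ time.

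\textbf{Weights and target.} Give an edge of color $j\ge1$ the weight $(k+1)^{j-1}$, give color $0$ weight $0$, and set the target
\[
t=\sum_{j=1}^{s-1} m_j(k+1)^{j-1}.
\]
A perfect matching has exactly $k$ non-dummy edges, so each color count $c_j$ satisfies $c_j\le k$. Hence the base-$(k+1)$ digits do not interact: weight $t$ holds iff $c_j=m_j$ for all $j\ge1$. Since $\sum_j c_j=k=\sum_j m_j$, this also forces $c_0=m_0$.

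\textbf{Correctness.} Given such a matching, list the matched edges. Assign the $m_j$ edges of color $j$ to the $m_j$ indices $i$ with $a_i=\alpha_j$, and let $b_i$ be the $B$-endpoint of the assigned edge. Then every $b\in B$ is used exactly once, because the matching saturates $B$. The sums $a_i+b_i$ are the matched sum vertices, which are distinct. Conversely, any solution yields such a matching: match each $b_i$ to $a_i+b_i$ by the edge of the color of $a_i$, and match the remaining vertices of $S$ to dummies. The output is checked deterministically before it is returned, so the algorithm never outputs an incorrect solution, only failure.

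\textbf{Running time and main obstacle.} Since $t\le k(k+1)^{s-2}\cdot\frac{k+1}{k}\le (k+1)^{s-1}$ and $|V|=\poly(k,s)$, the bound is $(k+1)^{s-1}\poly(k,s,\log m)$. The main obstacle is that the cited engine is a decision/Monte Carlo procedure, so it may not directly output a matching. I would first check whether Theorem~2 provides search. If it does not, I would use self-reducibility: delete edges one at a time and re-query, keeping each deletion only while the answer stays positive. This costs $\poly(k,s)$ calls. Each call's error is amplified to $1/\poly$ by repetition, and a union bound over the calls keeps the overall success probability at least $2/3$. The final deterministic verification guarantees one-sided correctness. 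The case $s=1$ is trivial: take $t=0$ (only color $0$), or observe that any ordering works.
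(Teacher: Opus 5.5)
Your proposal is correct and follows essentially the paper's route. The paper builds the same diagonal graph on $B$ and the at most $ks$ potential sums. It encodes all but one color count as base-$(k+1)$ digits of a single target, completes with zero-weight dummy vertices, and invokes \cite[Theorem~2]{lassotaLukasiewiczPolak2022exactMatching} with an edge-deleting self-reduction and deterministic verification. The differences are cosmetic: you inline the Color-Counted Matching step (Theorem~\ref{thm:ccm-direct-weighted}) instead of citing it, you place the dummies only on the left side, and you give color $0$ rather than the last color weight zero.
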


The proof of Theorem~\ref{thm:bounded-support}, given in
Section~\ref{sec:proof-bounded-support}, proceeds through a more general
matching primitive, which we call Color-Counted Matching.

\begin{definition}A Color-Counted Matching instance consists of a graph $(V,E)$, an edge coloring $\chi:E\to \{0,\dots,h-1\}$,
and nonnegative integer target counts $r_0,\dots,r_{h-1}$. The task is to find a matching
$M$ such that
$|M\cap \chi^{-1}(i)|=r_i$ for every $i\in \{0,\dots,h-1\}.$
\end{definition}

We denote $q=\sum_{i=0}^{h-1} r_i$. Every feasible solution of a Color-Counted Matching instance has exactly
$q$ edges. Thus, if $|V|=2q$, every feasible solution is automatically a
perfect matching. Similarly, in a bipartite graph $(L\sqcup R,E)$, if
$|L|=q$, then every feasible solution automatically saturates $L$, while
vertices of $R$ may remain unmatched.

We now state the complexity of the Color-Counted Matching algorithm obtained
from the exact-weight Monte Carlo algorithm of Lassota, {\L}ukasiewicz, and
Polak \cite{lassotaLukasiewiczPolak2022exactMatching}. In particular, this gives randomized polynomial time for every fixed
number of colors.

\begin{theorem}
\label{thm:ccm-direct-weighted}
For $q\ge 1$ and $h\ge 2$, feasibility of a Color-Counted Matching instance
can be decided with one-sided error in time $
 \widetilde O\!\left(
 q(q+1)^{h-2}|V|^8+|V|^2+|E|
 \right).$
For every $h\ge 1$, a matching with the prescribed counts, whenever one
exists, can be constructed with success probability at least $2/3$ in time $(q+1)^{h-1}\poly(|V|+|E|).$
\end{theorem}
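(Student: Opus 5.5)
The plan is to reduce Color-Counted Matching to Exact-Weight Perfect Matching and invoke the algorithm of Lassota, {\L}ukasiewicz, and Polak~\cite[Theorem~2]{lassotaLukasiewiczPolak2022exactMatching}. It runs in time $\widetilde O(t|V'|^8+|E'|)$ on a multigraph $(V',E')$ with target $t$, and has one-sided error. The reduction has two parts: a padding gadget that turns ``matching of size $q$'' into ``perfect matching'', and a base-$(q+1)$ encoding of the color counts into one weight.

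\emph{Padding.} If $2q>|V|$, reject immediately. Otherwise let $d=|V|-2q$. Add $d$ fresh dummy vertices, each adjacent to every original vertex, with no edges among the dummies. Call the result $G'$; then $|V'|\le 2|V|$ and $|E'|\le |E|+|V|^2$.

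\begin{itemize}
\item A perfect matching of $G'$ matches all $d$ dummies to $d$ distinct original vertices.
\item The remaining $2q$ original vertices are then matched by $q$ original edges.
\item Conversely, any $q$-edge matching of $G$ extends to a perfect matching of $G'$ by pairing the $d$ uncovered original vertices with the dummies arbitrarily.
\end{itemize}

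Every feasible color-counted solution has exactly $q$ edges. Hence feasibility is equivalent to the existence of a perfect matching of $G'$ whose original edges have the prescribed color counts.

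\emph{Encoding.} Give dummy edges and color-$0$ edges weight $0$, and give an edge of color $i\ge 1$ weight $(q+1)^{i-1}$. Set the target
\[
t=\sum_{i=1}^{h-1} r_i (q+1)^{i-1}\le q(q+1)^{h-2}.
\]
A perfect matching of $G'$ contains exactly $q$ original edges, so each color count $c_i$ satisfies $0\le c_i\le q<q+1$. Therefore its weight $\sum_{i\ge1}c_i(q+1)^{i-1}$ equals $t$ if and only if $c_i=r_i$ for all $i\ge1$, by uniqueness of base-$(q+1)$ digits. The count $c_0=q-\sum_{i\ge1}c_i$ then equals $r_0$ automatically. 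Building $G'$ and the weights costs $O(|V|^2+|E|)$. Applying the cited theorem gives the decision bound $\widetilde O(q(q+1)^{h-2}|V|^8+|V|^2+|E|)$, and the one-sided error is inherited directly. The main point to check carefully is this digit argument, in particular that no carries can occur; the bound $c_i\le q$ guarantees that.

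\emph{Construction.} I would use self-reducibility.

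\begin{enumerate}
\item Amplify the decision procedure by $O(\log|E|)$ independent repetitions, so that a ``no'' answer on a feasible instance occurs with probability at most $1/(3(|E|+1))$. A ``yes'' answer is always correct.
\item Test the whole instance; report failure if the answer is no.
\item Scan the edges in turn. Tentatively delete each edge and keep the deletion whenever the amplified test still reports feasibility.
\end{enumerate}

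With no false negatives, the current graph stays feasible throughout. At the end, every surviving edge lies in every feasible matching, since otherwise it would have been deletable when scanned. So the surviving edge set is itself a feasible matching, and we output it after verifying it directly. A union bound over the $|E|+1$ calls gives success probability at least $2/3$, and the total time is $(q+1)^{h-1}\poly(|V|+|E|)$ because $q(q+1)^{h-2}\le (q+1)^{h-1}$.

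\emph{Degenerate cases.} For $h=1$, or when all nonzero counts fall in a single color, we only need a matching of size $r_0$. This is found deterministically by computing a maximum matching and discarding edges. The case $q=0$ is trivial: output the empty matching.
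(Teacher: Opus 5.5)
Your proposal is correct and follows the paper's proof essentially step for step. It uses the same dummy-vertex padding to force exactly $q$ original edges, the same base-$(q+1)$ encoding with one color given weight zero (you drop color $0$ where the paper drops color $h-1$, which is immaterial), and the same amplified edge-deletion self-reduction, which you spell out in more detail than the paper does. The only blemish is your extra degenerate case ``all nonzero counts in a single color'': there you would need a size-$r_i$ matching among the color-$i$ edges, not a size-$r_0$ matching, but this case is unnecessary because your general encoding already covers it.
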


\begin{proof}
The case $q=0$ is solved by the empty matching, and the case $h=1$ reduces to
finding a matching of size $q$. Assume therefore that $q\ge1$ and $h\ge2$.
If $2q>|V|$, the instance is infeasible. Otherwise add
$|V|-2q$ dummy vertices, join each dummy vertex to every original vertex by a
zero-weight edge, and retain the original edges. Every perfect matching of
the augmented graph uses exactly $q$ original edges, and conversely every
$q$-edge matching of the original graph extends to such a perfect matching.
The augmented graph has at most $2|V|$ vertices and at most
$|E|+|V|^2$ edges.
Set $Q=q+1$. Give every original edge of color $i<h-1$ weight $Q^i$ and
every edge of color $h-1$ weight zero, and set
\[
 T=\sum_{i=0}^{h-2}r_iQ^i.
\]
Equality with $T$ determines the first $h-1$ color counts by uniqueness of
base-$Q$ digits. The omitted count is then determined by the fact that
exactly $q$ original edges are selected. Moreover,
$T\le qQ^{h-2}$. Applying the weighted exact-matching algorithm of Lassota,
{\L}ukasiewicz, and
Polak~\cite[Theorem~2]{lassotaLukasiewiczPolak2022exactMatching} gives the
stated decision bound. Amplifying its one-sided error over the polynomially
many calls in the standard edge-fixing self-reduction constructs a witness
within the stated search bound; any returned matching is checked
deterministically.
\end{proof}

\begin{remark}
Partition a set of $q$ labeled slots into sets $S_i$ with $|S_i|=r_i$.
Replace every edge of color $i$ by one parallel copy for each slot
$j\in S_i$, and assign that copy the weight $2^q+2^{j-1}$. The partition
encoding of Lassota, {\L}ukasiewicz, and
Polak~\cite[Lemma~9]{lassotaLukasiewiczPolak2022exactMatching} shows that an
exact target of $q2^q+(2^q-1)$ forces every slot to be used exactly once.
Together with their Theorem~2 and the same dummy completion, this gives an
alternative $2^q\poly(|V|+|E|)$ randomized search algorithm. Hence one may
use the better of
$(q+1)^{h-1}\poly(|V|+|E|)$ and $2^q\poly(|V|+|E|).$
For the fixed-support application, where $q=k$ can be large and $h=s$ is
fixed, the first bound is the relevant one.
\end{remark}

For completeness, we also give a more self-contained route through Exact Red
Matching and the algorithm of Mulmuley--Vazirani--Vazirani. This route is
conceptually elementary, but it expands weights into unary red-black path
gadgets and therefore gives a weaker parameter dependence than the Monte Carlo algorithm of Lassota--{\L}ukasiewicz--Polak.

\begin{theorem}\label{thm:ccm}
Let $(V,E)$ be a graph whose edges are
colored with colors $0,\dots,h-1$. Let $r_0,\dots,r_{h-1}$ be prescribed target
counts, and set $q=\sum _{i=0}^{h-1} r_i$.
There is a randomized algorithm which either reports failure or outputs a
matching $M$ such that
$
 |M\cap \chi^{-1}(i)|=r_i$
for every $i\in\{0,\dots,h-1\}.
$
If such a matching exists, then the algorithm outputs one with probability at
least $2/3$ in time
$
 \left(|V|^2+|E|(q+1)^{h-1}\right)^{O(1)}.
$
The failure probability can be reduced by independent repetition.
In particular, since any feasible matching has size $q$, the problem is
randomized polynomial-time solvable for every fixed number of colors $h$.
If the input graph is simple, then $|E|\le |V|^2$ and $q\le \frac{|V|}{2}$ on feasible instances,
so the running time is bounded by
$(|V|+|E|)^{O(h)}$.
\end{theorem}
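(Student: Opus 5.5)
We prove Theorem~\ref{thm:ccm} by reducing Color-Counted Matching to Exact Red Matching, then invoking the Mulmuley--Vazirani--Vazirani algebraic algorithm. Recall that Exact Red Matching asks, for a graph with red and black edges and an integer $t$, for a perfect matching with exactly $t$ red edges. This is solvable in randomized polynomial time: take the Tutte matrix with variable $y$ multiplying red entries and random values for the indeterminates, evaluate the determinant as a polynomial in $y$ by interpolation, and test the coefficient of $y^{2t}$ (or the Pfaffian coefficient of $y^t$). By the isolation lemma / Schwartz--Zippel, a nonzero coefficient is detected with constant probability, and a witness is found by the standard edge-deletion self-reduction with amplification.

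The reduction proceeds in three steps.

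\textbf{Step 1 (trivial cases).} If $q=0$ we return the empty matching. If $2q>|V|$ we report failure. Otherwise we add $|V|-2q$ dummy vertices joined by black edges to every original vertex. Exactly as in the proof of Theorem~\ref{thm:ccm-direct-weighted}, perfect matchings of the augmented graph then correspond to $q$-edge matchings of the original graph. This step contributes the $|V|^2$ term.

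\textbf{Step 2 (weights and target).} Set $Q=q+1$ and give each edge of color $i<h-1$ the integer weight $w_i=Q^i$; edges of color $h-1$ and dummy edges get weight $0$. Set $T=\sum_{i\le h-2} r_i Q^i \le qQ^{h-2}$. Since each color count lies in $\{0,\dots,q\}$, uniqueness of base-$Q$ digits shows that a perfect matching has total weight exactly $T$ if and only if it uses $r_i$ edges of color $i$ for all $i<h-1$. The count for color $h-1$ is then forced to equal $r_{h-1}$, because exactly $q$ original edges are used.

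\textbf{Step 3 (unary gadgets).} We convert weights into red-edge counts by subdividing each edge $uv$ of weight $w$ into a path $u=x_0,x_1,\dots,x_{2w+1}=v$ with $2w$ new internal vertices. The odd-position edges $x_0x_1, x_2x_3,\dots,x_{2w}x_{2w+1}$ ($w+1$ of them) are colored red and the even-position edges black. Then either the path contributes the ``used'' pattern, which takes all odd edges, covers $u$ and $v$, and has $w+1$ red edges; or the ``unused'' pattern, which takes the $w$ even edges, covers only internal vertices, and has $0$ red edges. The internal vertices force exactly one of these two patterns. To avoid the $+1$ offset, we color the endpoint edge $x_0x_1$ black instead, so the used pattern has exactly $w$ red edges. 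Hence perfect matchings of the gadget graph with exactly $T$ red edges correspond to solutions. The gadget graph has $O(|V|^2+|E|\,Q^{h-1})$ vertices and edges, which gives the stated running time $(|V|^2+|E|(q+1)^{h-1})^{O(1)}$. A solution is recovered by reading off which gadgets are in the used pattern and then verified deterministically; this makes the error one-sided (output is correct or failure), and repetition boosts the success probability to $2/3$.

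For simple graphs we have $|E|\le|V|^2$ and $q\le |V|/2$, so $(q+1)^{h-1}\le |V|^{h}$ and the bound becomes $(|V|+|E|)^{O(h)}$.

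The main obstacle is verifying that the gadget admits only the two intended local patterns and that red counts add exactly. The first concern is that an unused gadget might partially cover an endpoint. This cannot happen: parity of the path length forces each gadget to either cover both $u$ and $v$ or neither, so it contributes exactly one original edge or none. The second concern is the reliance on the Mulmuley--Vazirani--Vazirani exact-red-matching procedure with polynomial-size weights, which we use as a black box.
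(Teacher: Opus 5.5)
Your proposal is correct and follows essentially the paper's route (dummy completion, base-$(q+1)$ encoding of the counts, unary red--black path gadgets whose internal vertices force an all-or-nothing pattern, then Mulmuley--Vazirani--Vazirani as a black box); it differs only in two ways: you drop color $h-1$ from the encoding, as the paper does for Theorem~\ref{thm:ccm-direct-weighted} whereas its proof here weights all $h$ colors by $Q^0,\dots,Q^{h-1}$, and you use a $(2w+1)$-edge gadget with a black first edge instead of the paper's $(2w-1)$-edge path that is all red when used, which accommodates weight $0$ and saves a factor $q+1$ in size. One correction to your aside on how MVV works: the coefficient of $y^{2t}$ in $\det=\mathrm{Pf}^2$ also collects products of the Pfaffian's $y^{t-j}$ and $y^{t+j}$ coefficients, so it can be nonzero when no perfect matching has exactly $t$ red edges, and only the Pfaffian coefficient of $y^t$ (or MVV as a black box, as you ultimately use it) gives a valid test.
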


\noindent The algorithms in Theorems~\ref{thm:bounded-support},
\ref{thm:ccm-direct-weighted}, and~\ref{thm:ccm} are randomized because they
rely on randomized exact-matching primitives. Theorem~\ref{thm:ccm-direct-weighted}
uses the exact-weight matching algorithm of Lassota, {\L}ukasiewicz, and
Polak, while the self-contained route leading to Theorem~\ref{thm:ccm} uses
the algorithm of Mulmuley--Vazirani--Vazirani, as explained in
Section~\ref{sec:self-contained-ccm}. Derandomizing these primitives in the
generality needed here is a long-standing open problem.

Beyond additive transversals, Theorems~\ref{thm:ccm-direct-weighted}
and~\ref{thm:ccm} give immediate fixed-color algorithms for several 
problems that can be formulated as Color-Counted Matching instances:

First, the formulation applies to the problem of tiling a region $R$ of the triangular grid with prescribed numbers of lozenges in each orientation
(the problem is illustrated in Fig.~\ref{lozenges}).
Lozenge tilings of finite regions of the triangular lattice are equivalent to
perfect matchings in the dual honeycomb graph. The three possible lozenge
orientations induce a $3$-coloring of the edges. Every tiling corresponds to a perfect matching of
the dual graph with three colors. Either Theorem~\ref{thm:ccm-direct-weighted} or Theorem~\ref{thm:ccm}
therefore gives a randomized algorithm with running time $|R|^{O(1)}$, and
hence polynomial time in the number of triangles of $R$.

\begin{figure}[htb]
\begin{center}
 \includegraphics[width=0.5\textwidth]{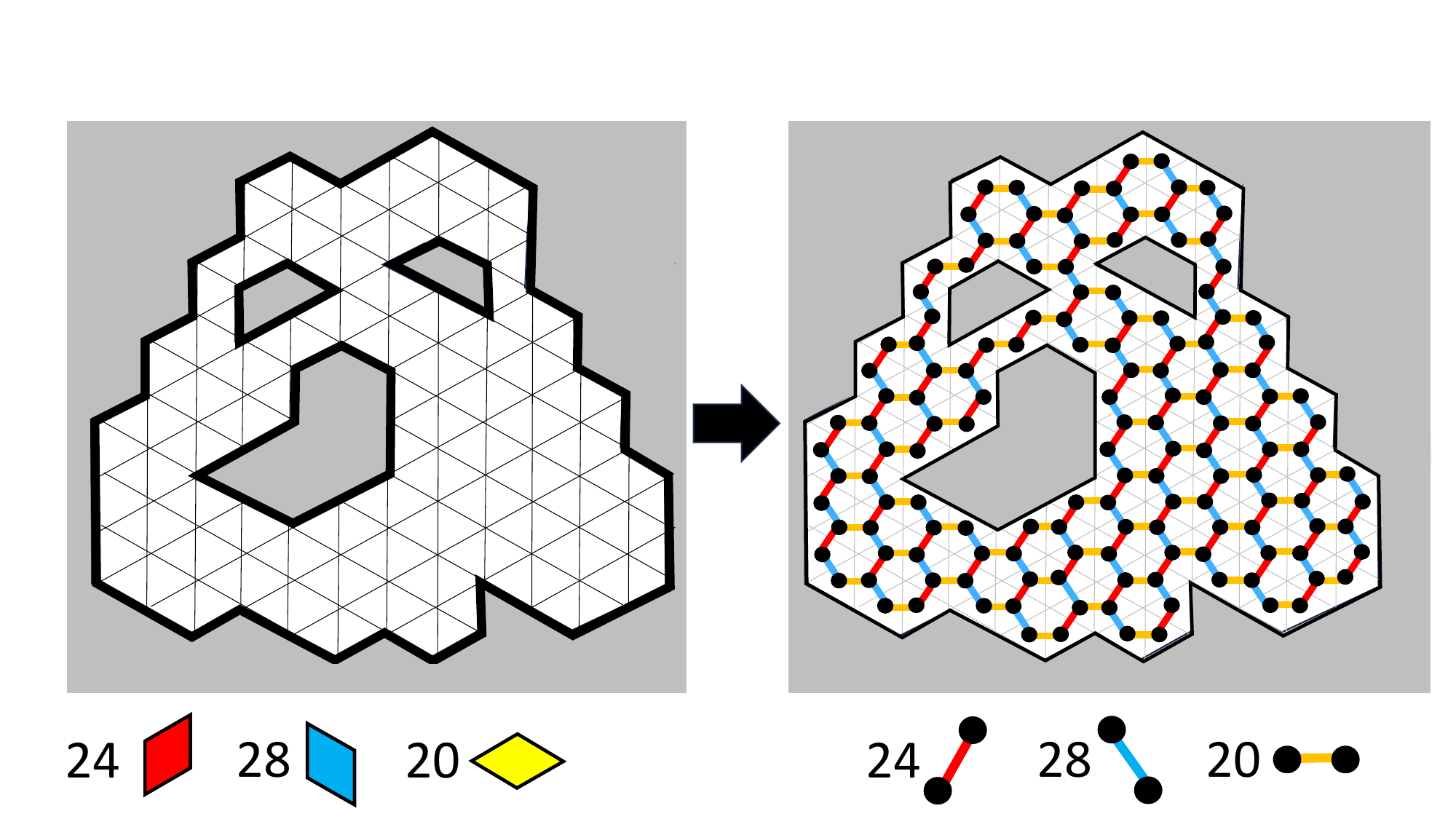}
 \hspace{-0.2cm}
 \includegraphics[width=0.5\textwidth]{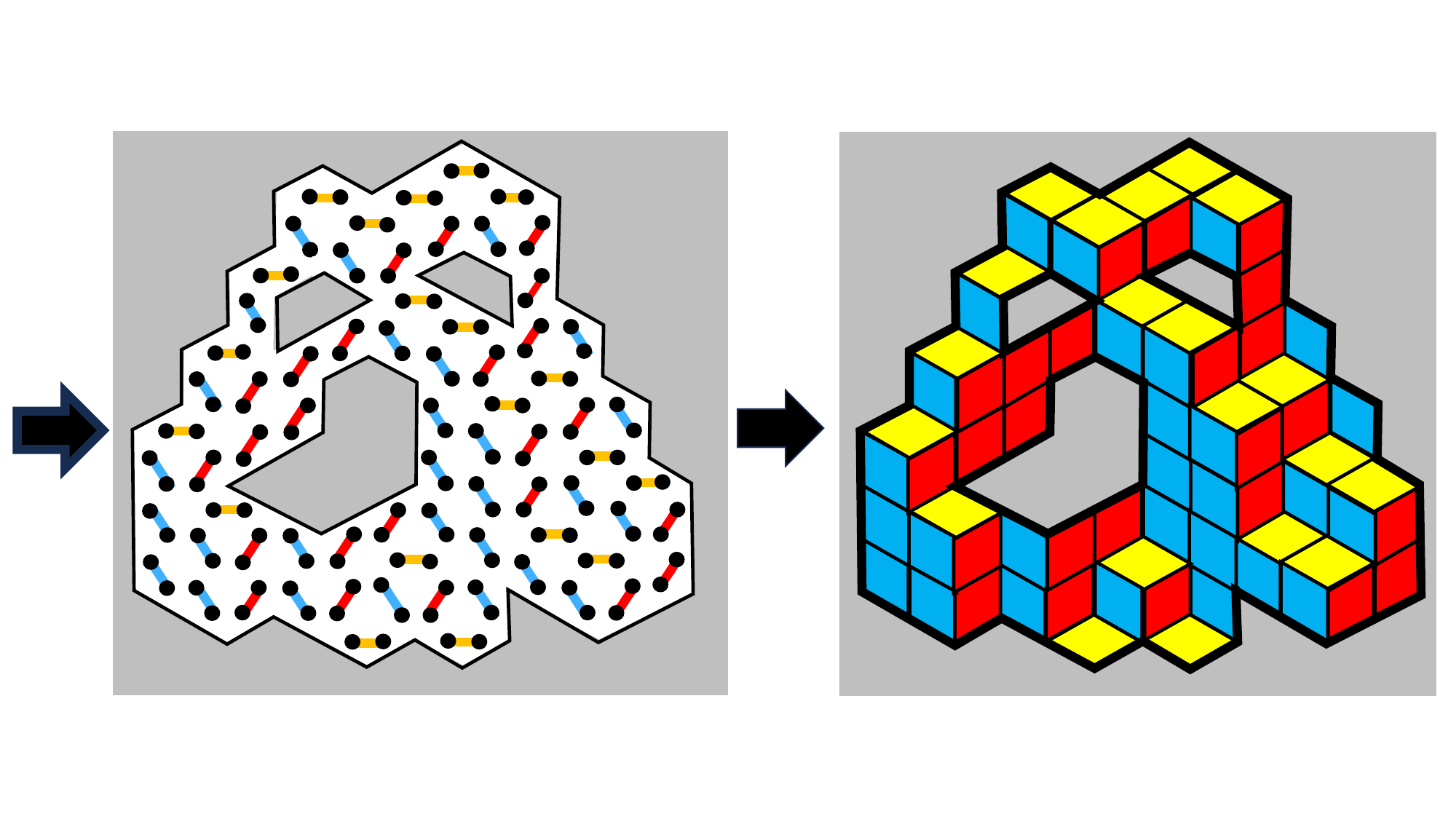}
\end{center}
\caption{\label{lozenges} Reduction of the problem of tiling a region of the triangular grid with prescribed numbers of lozenges in each orientation to a Color-Counted Matching problem with three colors.}
\end{figure}

Second, directed cycle covers with prescribed arc-type counts reduce directly
to Color-Counted Matching. Indeed, a directed cycle cover in a digraph
$D=(V,A)$ is equivalent to a perfect matching in the bipartite graph with left
and right copies of $V$, where an arc $u\to v$ becomes the edge $u_Lv_R$.
If the arcs have $h$ types, then the resulting
Color-Counted Matching instance has $2|V|$ vertices, $|A|$ edges, and target
matching size $|V|$. Thus either Theorem~\ref{thm:ccm-direct-weighted} or Theorem~\ref{thm:ccm} gives a randomized algorithm
with running time $(|V|^2+|A||V|^{h-1})^{O(1)}$.

Third, Theorems \ref{thm:ccm-direct-weighted} and \ref{thm:ccm} apply to permutation matrices with prescribed
diagonal sums. A permutation matrix of order $N$ is a perfect matching in the
bipartite graph between rows and columns. Coloring an edge $(i,j)$ by its
diagonal class turns diagonal sums into color counts. If only $h$ diagonal
classes have nonzero prescribed sums, the other diagonals can be deleted, and
the resulting bipartite graph has $2N$ vertices, at most $Nh$ edges, and
target matching size $N$. Either Theorem \ref{thm:ccm-direct-weighted} or Theorem~\ref{thm:ccm} therefore gives a randomized
algorithm with running time $N^{O(h)}$, and hence polynomial time for fixed $h$.

We now turn to the proof of the main additive-transversal theorem.

\section{Proof of Theorem~\ref{thm:bounded-support}}
\label{sec:proof-bounded-support}

We now prove Theorem~\ref{thm:bounded-support} by reducing Additive Transversal
to Color-Counted Matching and then applying
Theorem~\ref{thm:ccm-direct-weighted}. The self-contained
Theorem~\ref{thm:ccm} gives an alternative route to randomized polynomial
time for fixed support size, but the sharper bound stated in
Theorem~\ref{thm:bounded-support} comes from
Theorem~\ref{thm:ccm-direct-weighted}.

The reduction works as follows. We consider an instance with a multiset
$A=(a_1,\dots,a_k)$ of elements of $\mathbb Z_m$ and a set
$B\subseteq \mathbb Z_m$ of cardinality $k$. Let
$\supp(A)=\{\alpha_0,\dots,\alpha_{s-1}\}$, and let $m_j$ be the multiplicity
of $\alpha_j$ in $A$.

We build a bipartite graph, called the \textit{diagonal graph} of $A$ and $B$, as
follows. The left vertex set is a copy of $B$, denoted by $L$. The right
vertex set is a disjoint copy $R$ of the set
$\{\alpha_j+b : 0\le j<s,\ b\in B\}\subseteq \mathbb Z_m$.
For every $b\in B$ and every $j\in\{0,\dots,s-1\}$, we add the edge
connecting the left copy of $b$ to the right copy of $b+\alpha_j$, and color
it by $j$.
Thus
the diagonal graph has $k$ left vertices, at most $ks$ right vertices, and exactly $ks$
edges.

\begin{lemma}
\label{lem:additive-to-Color-CountedMatching}
The additive transversal instance with multiset $A$ and set $B$ has a solution
if and only if the diagonal graph built from $A$ and $B$ has a matching using
exactly $m_j$ edges of color $j$ for every $j\in \{0,\dots,s-1\}$.
\end{lemma}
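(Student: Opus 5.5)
We prove the two implications separately, using the observation that the edges of the diagonal graph are in bijection with pairs $(b,j)\in B\times\{0,\dots,s-1\}$. The edge for $(b,j)$ joins the left copy of $b$ to the right copy of $b+\alpha_j$. Hence a matching with the prescribed color counts amounts to three things: an assignment of a color to each left vertex it covers, each color $j$ used exactly $m_j$ times, and the resulting sums $b+\alpha_j$ pairwise distinct (right vertices not shared).

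For the forward direction, take a solution $b_1,\dots,b_k$ and for each $i$ let $j(i)$ be the index with $a_i=\alpha_{j(i)}$. Consider the set $M$ of edges $\{b_i, b_i+\alpha_{j(i)}\}$ of color $j(i)$, one for each $i$.
\begin{itemize}
\item Since the $b_i$ are distinct, the left endpoints are distinct.
\item Since the sums $a_i+b_i$ are pairwise distinct in $\Z_m$, the right endpoints are distinct copies in $R$.
\end{itemize}
So $M$ is a matching. The number of edges of color $j$ equals $|\{i: a_i=\alpha_j\}|=m_j$.

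For the converse, let $M$ be a matching with exactly $m_j$ edges of color $j$. Then $|M|=\sum_j m_j=k=|L|$, so $M$ saturates $L$: every $b\in B$ is matched by exactly one edge, with some color $c(b)$. The map $c:B\to\{0,\dots,s-1\}$ has fibers of sizes $|c^{-1}(j)|=m_j$. The index sets $I_j=\{i: a_i=\alpha_j\}$ also have sizes $m_j$. Choose, for each $j$, an arbitrary bijection $I_j\to c^{-1}(j)$, and let $b_i$ be the image of $i$. This gives an ordering $b_1,\dots,b_k$ of $B$ with $a_i=\alpha_{c(b_i)}$. Then $a_i+b_i=b_i+\alpha_{c(b_i)}$ is exactly the right endpoint of the $M$-edge at $b_i$. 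Since $M$ is a matching, these right endpoints are distinct, so the sums are pairwise distinct.

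The only point needing care is the counting in the converse: the color counts force $|M|=k$, hence saturation of $L$, and the fiber sizes must match the multiplicities so the bijections $I_j\to c^{-1}(j)$ exist. The construction builds this in by design, so I expect no real obstacle. One should also note that right vertices are copies of elements of $\Z_m$, not indexed by $j$. Equal sums $b+\alpha_j=b'+\alpha_{j'}$ therefore hit the same right vertex, which is what makes distinctness of sums equivalent to the matching condition.
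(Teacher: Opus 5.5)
Your proposal is correct and follows the paper's proof essentially step for step. The forward direction builds the matching from the edges $b_i(b_i+a_i)$ colored $j(i)$. The converse uses $|M|=\sum_j m_j=k$ to get saturation of $L$, chooses arbitrary bijections $I_j\to c^{-1}(j)$ (the paper's $B_j$), and reads off distinctness of the sums $a_i+b_i$ from distinctness of the matched right endpoints.
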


\begin{proof}
We first assume that the additive transversal problem admits a solution. Thus
there is an ordering $b_1,\dots,b_k$ of the elements of $B$ such that the sums
$c_i=a_i+b_i$
are pairwise distinct.

For each index $i$ from $1$ to $k$, let $j(i)$ be the unique index such that
$a_i=\alpha_{j(i)}$. By construction of the diagonal graph, there is an edge
from the left vertex $b_i$ to the right vertex $c_i=b_i+\alpha_{j(i)}$, and
this edge has color $j(i)$. Let
$M=\{\, b_i c_i : 1\le i\le k\,\}$.
Since the vertices $b_1,\dots,b_k$ are precisely the elements of $B$, the
matching $M$ saturates $L$. Moreover, since the sums $c_1,\dots,c_k$ are
pairwise distinct, no two edges of $M$ have the same right endpoint. Hence
$M$ is indeed a matching.

It remains to check the color counts. For each color $j$, the number of edges
of color $j$ in $M$ is exactly the number of indices $i$ such that
$a_i=\alpha_j$. This number is precisely the multiplicity $m_j$ of
$\alpha_j$ in $A$. Therefore $M$ is a matching using exactly
$m_j$ edges of color $j$ for every $j\in\{0,\dots,s-1\}$.

Conversely, assume that the diagonal graph admits a matching $M$ using exactly
$m_j$ edges of color $j$ for every $j\in\{0,\dots,s-1\}$. Since
$\sum_{j=0}^{s-1}m_j=k$, the matching $M$ has exactly $k$ edges, and therefore
saturates the left side $L$.

For each $b\in B$, let $c_b$ be the right endpoint matched to the left copy of
$b$. If the edge $bc_b$ has color $j$, then by construction
$c_b=b+\alpha_j$.

For each color $j$, let $I_j=\{i\in\{1,\dots,k\}: a_i=\alpha_j\}$
and let $B_j$ be the set of elements $b\in B$ whose left copy is matched by
an edge of color $j$. By the prescribed color counts, both $I_j$ and $B_j$
have size $m_j$. Choose an arbitrary bijection from $I_j$ to $B_j$.
Doing this for every $j$ gives a bijection between the index set
$\{1,\dots,k\}$ of $A$ and the set $B$. Equivalently, it gives an ordering
$b_1,\dots,b_k$ of $B$.

If $i\in I_j$, then $a_i=\alpha_j$, and the matched right endpoint of the
left copy of $b_i$ is $b_i+\alpha_j=b_i+a_i$. Since $M$ is a matching, these
right endpoints are pairwise distinct. Therefore the sums $a_i+b_i$ are
pairwise distinct, and the additive transversal instance has a solution.
\end{proof}

We now use Lemma~\ref{lem:additive-to-Color-CountedMatching} and the
Color-Counted Matching algorithm to prove Theorem~\ref{thm:bounded-support}.

\begin{proof}
We consider an additive transversal instance over $\Z_m$, with
$|A|=|B|=k$, and we denote $s=|\supp(A)|$.
By Lemma~\ref{lem:additive-to-Color-CountedMatching}, the instance reduces to
a Color-Counted Matching instance on the diagonal graph, with target counts
$m_0,\dots,m_{s-1}$. Hence the target matching size is
$q=\sum_{j=0}^{s-1}m_j=k$, and the number of colors is $h=s$. 
The diagonal graph has $k$ left vertices, at most $ks$ right vertices, and
exactly $ks$ edges. Applying Theorem~\ref{thm:ccm-direct-weighted} with
$q=k$ and $h=s$ gives randomized search time $(k+1)^{s-1}\poly(k,s)$
in the unit-cost model.

It remains to account for arithmetic in $\Z_m$. Constructing the diagonal
graph requires computing, comparing, and storing the $ks$ residues
$b+\alpha_j\pmod m$, each represented with $O(\log m)$ bits. The matching
weights use $O(s\log(k+1))$ bits. Consequently, the complete bit complexity is $(k+1)^{s-1}\poly(k,s,\log m)$.

If the Color-Counted Matching algorithm returns a matching with the prescribed
color-count vector, Lemma~\ref{lem:additive-to-Color-CountedMatching}
converts it into a valid additive transversal. The returned solution can be
checked deterministically. If a solution exists, the success probability is at
least $2/3$, and it can be amplified by independent repetition. This proves
Theorem~\ref{thm:bounded-support}.
\end{proof}

\section{Proof of Theorem~\ref{thm:ccm}}
\label{sec:self-contained-ccm}

We now provide the more self-contained proof leading to Theorem \ref{thm:ccm}.

\subsection{From $h$ colors to two colors} 

We first reduce Color-Counted Matching to Exact Red Matching.

\begin{definition}
An Exact Red Matching instance consists of a graph $(V,E)$, a set of red
edges $E_{\mathrm{red}}\subseteq E$, and an integer $T$. The task is to find a
perfect matching $M$ such that $|M\cap E_{\mathrm{red}}|=T$.
\end{definition}

The Exact Red Matching problem is the special case of Color-Counted Matching
with two colors, red and black, in which the matching is required to be
perfect. Indeed, every perfect matching
has $\frac{|V|}{2}$ edges. Thus, prescribing the number of red edges of a perfect
matching is equivalent to prescribing target counts $r_{\mathrm{red}}$ and
$r_{\mathrm{black}}$ satisfying
$r_{\mathrm{red}}+r_{\mathrm{black}}=\frac{|V|}{2}$.

\begin{figure}[htb]
\begin{center}
\includegraphics[width=0.6\textwidth]{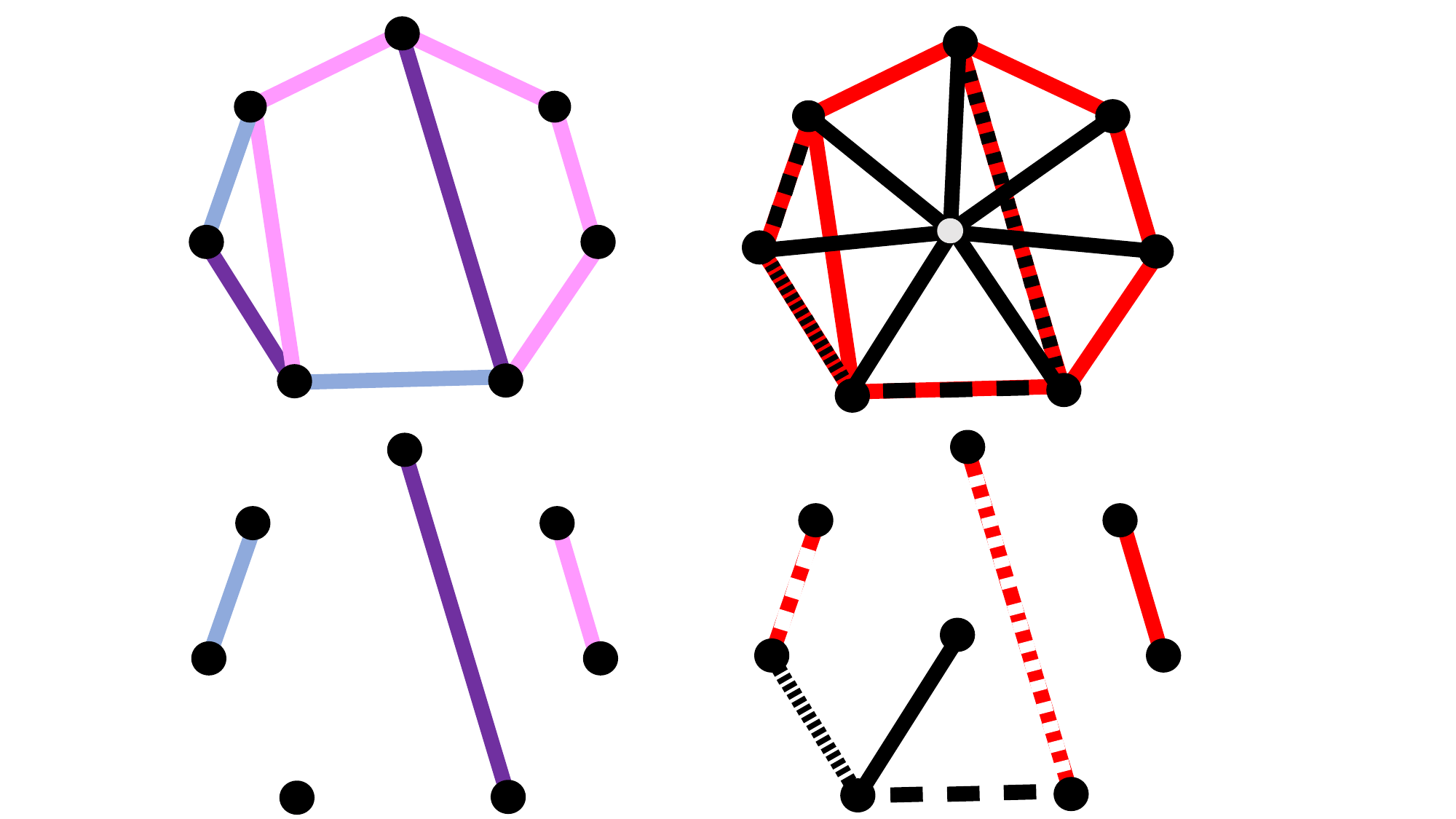}
\end{center}
\caption{\label{red}
\textbf{Reduction of a $3$-color Color-Counted Matching instance with target
counts $(1,1,1)$ to an Exact Red Matching instance with $21$ red edges.}
The Color-Counted Matching instance appears on the left, and the red-black
$Q$-augmented graph appears on the right. The target matching size is
$q=1+1+1=3$, so $Q=q+1=4$, and the red target is
$T=Q^0+Q^1+Q^2=1+4+16=21$. Corresponding solutions are shown below.}
\end{figure}

\paragraph{The red-black $Q$-augmented graph.}

We now describe the construction of the Exact Red Matching instance. Let
$H=(V,E)$ be a Color-Counted Matching instance whose edges are colored with
colors $0,\dots,h-1$, and let $r_0,\dots,r_{h-1}$ be the prescribed target
counts. We denote $q=\sum_{i=0}^{h-1} r_i$. Thus any matching satisfying the
prescribed color counts has size $q$.

If $q>\lfloor \frac{|V|}{2}\rfloor$, then no matching of size $q$ exists in an
$|V|$-vertex graph, and we may immediately report failure. Hence we assume from
now on that $q\le \lfloor \frac{|V|}{2} \rfloor$. If $q=0$, the empty matching is a solution of the Color-Counted
Matching instance. We therefore assume that $q\ge1$ and set $Q=q+1$.

The red-black $Q$-augmented graph $H'=(V',E')$ is built from $H$ as follows.
The construction is illustrated in Fig.~\ref{red}.

\begin{enumerate}
\item We initialize $V'=V$ and $E'=\emptyset$.

 \item Each edge $e=(u,v)$ of color $i$ in $E$ is replaced by a path
 $u=x_{e,0},\ x_{e,1},\ \dots,\ x_{e,2Q^i-1}=v$
 with $2Q^i-1$ edges and $2Q^i-2$ new internal vertices added to $V'$ and
 used only for the edge $e$.

 The $Q^i$ edges $x_{e,2t}x_{e,2t+1}$ with $0\le t<Q^i$ are colored red,
 while the $Q^i-1$ edges $x_{e,2t+1}x_{e,2t+2}$ with $0\le t<Q^i-1$ are
 colored black. All these path edges are added to $E'$.

 \item If $2q<|V|$, then we add to $V'$ a set $D$ of $|V|-2q$ dummy vertices
 and join every vertex of $D$ to every original vertex of $V$ by a black edge.
 These dummy edges are added to $E'$.
\end{enumerate}

The target number of red edges in the Exact Red Matching instance is
$T=\sum_{i=0}^{h-1} r_iQ^i$.

\paragraph{Size of the red-black $Q$-augmented graph.}

We now compute the size of the red-black $Q$-augmented graph. For an edge
$e\in E$ of color $\chi(e)$, the replacement path has $2Q^{\chi(e)}$ vertices
and $2Q^{\chi(e)}-1$ edges. Since the two endpoints of the path are the
original endpoints of the edge, this replacement adds $2Q^{\chi(e)}-2$ new
vertices. Therefore
$|V'|=|V|+(|V|-2q)+\sum_{e\in E}(2Q^{\chi(e)}-2)$, and hence
$|V'|\le 2|V|+2|E|Q^{h-1}$.

Similarly, the path replacements contribute
$\sum_{e\in E}(2Q^{\chi(e)}-1)$ edges. The dummy vertices contribute
$|V|(|V|-2q)$ black edges, since each dummy vertex is joined to every original
vertex. Hence
$|E'|=|V|(|V|-2q)+\sum_{e\in E}(2Q^{\chi(e)}-1)$, and therefore
$|E'|\le |V|^2+2|E|Q^{h-1}$.

Thus the red-black $Q$-augmented graph has size
$|H'|=O\!\left(|V|^2+|E|Q^{h-1}\right)$. Since $Q=q+1$, this is
$|H'|=O\!\left(|V|^2+|E|(q+1)^{h-1}\right)$.

\begin{remark}
The construction of the red-black $Q$-augmented graph is not necessarily
optimal. One could use mixed radices instead of the uniform base $Q$. More
precisely, if $Q_i$ is a strict upper bound on the number of selected edges of
color $i$, one may define $W_0=1$ and
$W_i=\prod_{j=0}^{i-1}Q_j$ for $i\ge 1$,
and replace the weight $Q^i$ by
$W_i$. A preprocessing step could further reduce the size of the
augmented graph by taking the target counts $r_i$ into account. We do not use
this optimization here.
\end{remark}

\paragraph{Equivalence between $h$ colors and two colors in the $Q$-augmented graph.}

We now show the equivalence between the Color-Counted Matching instance in
$H$ and the Exact Red Matching instance in the red-black $Q$-augmented graph
$H'$.

\begin{proposition}\label{prop}
Let $H=(V,E)$ be a Color-Counted Matching instance with colors
$0,\dots,h-1$ and target color counts $r_0,\dots,r_{h-1}$. Let
$q=\sum_{i=0}^{h-1}r_i$, assume that $1\le q\le \lfloor \frac{|V|}{2}\rfloor$, and set
$Q=q+1$. Let $H'=(V',E')$ be the red-black $Q$-augmented graph constructed above.
Then $H$ has a matching with the prescribed color counts if and only if $H'$
has a perfect matching with exactly $T=\sum_{i=0}^{h-1} r_iQ^i$ red edges.
\end{proposition}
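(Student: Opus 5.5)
The proof rests on a local analysis of how a perfect matching of $H'$ can meet a single replacement path. Fix an edge $e=uv$ of color $i$ with path $u=x_{e,0},\dots,x_{e,2Q^i-1}=v$. The internal vertices $x_{e,1},\dots,x_{e,2Q^i-2}$ are adjacent only to their path neighbours, so any perfect matching $M'$ of $H'$ covers them using path edges alone. There are exactly two possibilities. In the \emph{open} state, $M'$ contains all $Q^i$ red edges $x_{e,2t}x_{e,2t+1}$, so both $u$ and $v$ are covered by the path. In the \emph{closed} state, $M'$ contains all $Q^i-1$ black edges $x_{e,2t+1}x_{e,2t+2}$, and neither $u$ nor $v$ is covered by the path. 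To prove this dichotomy, I will observe that once we know whether $x_{e,1}$ is matched to $x_{e,0}$ or to $x_{e,2}$, the rest of the path is forced by parity, since the path has an even number of internal vertices. In particular, an open path contributes exactly $Q^i$ red edges and a closed path contributes $0$.

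For the forward direction, let $M$ be a matching of $H$ with $|M\cap\chi^{-1}(i)|=r_i$, so $|M|=q$. Open the paths of the edges in $M$ and close all other paths. This covers every internal vertex and exactly the $2q$ vertices of $V$ saturated by $M$. The remaining $|V|-2q$ original vertices are then matched bijectively to the $|V|-2q$ dummy vertices, using the complete black bipartite connection between them. The resulting perfect matching has $\sum_i r_iQ^i=T$ red edges, since dummy edges are black.

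For the converse, let $M'$ be a perfect matching of $H'$ with exactly $T$ red edges, and let $M$ be the set of edges $e\in E$ whose paths are open. By the dichotomy, each original vertex is covered either by exactly one open path or by a dummy edge, so $M$ is a matching. Dummy vertices are adjacent only to original vertices, so exactly $|V|-2q$ original vertices are matched to dummies. The remaining $2q$ original vertices are covered by open paths, which gives $|M|=q$. Writing $c_i=|M\cap\chi^{-1}(i)|$, we have $\sum_i c_i=q$ and $\sum_i c_iQ^i=T=\sum_i r_iQ^i$. Since $0\le c_i,r_i\le q<Q$, uniqueness of base-$Q$ representations gives $c_i=r_i$ for every $i$.

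The main obstacle is the converse, specifically the counting step that shows $|M|=q$ and that each $c_i$ is at most $q$, so that the digits are legal. The dummy-vertex count is exactly what pins down $|M|$, and the bound $c_i\le|M|=q<Q$ then follows. I will also state the edge case $2q=|V|$ (no dummies) explicitly, and note that multigraph parallel edges cause no trouble because each edge has its own private internal vertices.
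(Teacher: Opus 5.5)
Your proposal is correct and follows essentially the same route as the paper's proof: the dichotomy that each replacement path is either all-red (active) or all-black (inactive), the forward construction with dummies absorbing the $|V|-2q$ unmatched original vertices, the dummy count forcing $|M|=q$, and uniqueness of base-$Q$ digits since every count is at most $q<Q$. One small detail to tidy: for color $0$ the path is a single red edge with no internal vertices, so there the dichotomy holds trivially rather than by your argument about the neighbours of $x_{e,1}$.
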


\begin{proof}
We first prove that, given a solution $M$ of the Color-Counted Matching
instance, we can build a solution of the Exact Red Matching instance for the
corresponding red-black $Q$-augmented graph with exactly
$T=\sum_{i=0}^{h-1} r_iQ^i$ red edges. Thus $M$ contains exactly $r_i$ edges
of color $i$ for every $i\in\{0,\dots,h-1\}$, and therefore $|M|=q$.

We build a matching $M'$ in $H'$ as follows. For each edge $e=uv$ of $M$, we
select in $M'$ all the red edges of the path replacing $e$. For each edge
$e=uv$ of $E$ that does not belong to $M$, we select in $M'$ all the black
edges of the path replacing $e$. Finally, we match the $|V|-2q$ vertices of
$V$ that are not matched by $M$ bijectively to the $|V|-2q$ dummy vertices.

This gives a perfect matching of $H'$. Indeed, in every path replacing an edge
of $H$, all internal vertices are matched. If the corresponding edge belongs
to $M$, then the two original endpoints are matched through the red edges of
the path. If it does not belong to $M$, then the black edges of the path match
all internal vertices and leave the two original endpoints available. These
available original vertices are either matched through another selected path
or, if they are unmatched in $M$, matched to dummy vertices.

The red edges selected in $M'$ come only from the paths corresponding to the
edges of $M$. Each selected edge of color $i$ contributes exactly $Q^i$ red
edges. Hence the total number of red edges in $M'$ is
$\sum_{i=0}^{h-1} r_iQ^i=T$. Thus $M'$ is a perfect matching of $H'$ with
exactly $T$ red edges.

We now prove the converse implication. Assume that $H'$ admits a perfect
matching $M'$ with exactly $T$ red edges. We build from it a matching $M$ in
$H$.

For each edge $e=uv$ of $H$, consider the alternating red-black path replacing
$e$ in $H'$. Since the internal vertices of this path have no incident edges
outside the path, a perfect matching of $H'$ has only two possible states on
this path: either it contains all the red edges of the path, or it contains
all the black edges of the path. We call the path active in the first case
and inactive in the second case. We select the edge $e$ in $M$ if and only if
the path replacing $e$ is active.

It follows that $M$ is a matching in $H$. Indeed, two active paths cannot
share an original endpoint, since otherwise $M'$ would contain two edges
incident with the same vertex.

We now observe that $M$ has exactly $q$ edges. In the graph $H'$, every dummy
vertex is matched to an original vertex. Since there are $|V|-2q$ dummy
vertices, exactly $|V|-2q$ original vertices are matched to dummy vertices.
The remaining $2q$ original vertices must be matched through active paths.
Each active path uses exactly two original endpoints. Hence exactly $q$ paths
are active, and therefore $|M|=q$.

For each $i\in\{0,\dots,h-1\}$, let $y_i$ be the number of selected edges of
color $i$ in $M$. Equivalently, $y_i$ is the number of active paths of weight
$Q^i$ in $H'$. Since an active path of color $i$ contributes exactly $Q^i$
red edges, the number of red edges in $M'$ is
$\sum_{i=0}^{h-1} y_iQ^i$. Since $M'$ has exactly $T$ red edges, $\sum_{i=0}^{h-1} y_iQ^i=\sum_{i=0}^{h-1} r_iQ^i$.

We now use uniqueness of base-$Q$ expansion. Since $|M|=q$ and $Q=q+1$, we
have $y_i<Q$ for every $i$. Also, since $r_i\le q$ and $Q=q+1$, we have
$r_i<Q$ for every $i$. Thus both sides of the previous equality are base-$Q$
expansions with digits strictly smaller than $Q$. Consequently, for every
$i\in\{0,\dots,h-1\}$, we have $y_i=r_i$.

Therefore $M$ contains exactly $r_i$ edges of color $i$, and so
$M$ is a solution of the original Color-Counted Matching instance.
\end{proof}

\subsection{Proof of Theorem~\ref{thm:ccm}}

The original result of Mulmuley--Vazirani--Vazirani places exact matching in
randomized NC. In this paper we only use the resulting randomized
polynomial-time search algorithm that we restate here.

\begin{theorem}[\cite{mulmuley1987matching}]
\label{thm:mvv-exact-matching}
Let $(V,E)$ be a graph, let $E_{\mathrm{red}}\subseteq E$, and let $T\ge 0$.
There is a randomized algorithm running in time polynomial in $|V|+|E|$ which
either reports failure or outputs a perfect matching $M$ such that
$
 |M\cap E_{\mathrm{red}}|=T.
$
If such a perfect matching exists, then the algorithm outputs one with
probability at least $2/3$. If no such perfect matching exists, then the
algorithm reports failure with probability $1$.
The success probability on yes-instances can be increased to $1-\varepsilon$,
for any $\varepsilon>0$, by independent repetition, with an additional
multiplicative factor $O(\log(1/\varepsilon))$ in the running time.
\end{theorem}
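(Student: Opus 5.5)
The plan is to follow the classical algebraic route of Mulmuley--Vazirani--Vazirani: combine the Isolation Lemma with a Tutte (skew-symmetric) matrix carrying one extra formal variable $y$ that marks red edges.

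\textbf{Step 1: random weights and the parametrized Tutte matrix.} Draw independent uniform weights $w(e)\in\{1,\dots,2|E|\}$ for all $e\in E$. Fix an orientation of every edge $uv$ and set
\[
A_{uv}=-A_{vu}=2^{w(e)}\,y^{[e\in E_{\mathrm{red}}]},
\]
with all other entries equal to $0$. The Pfaffian $\operatorname{Pf}(A)$ is a polynomial in $y$ of degree at most $|V|/2$. Its coefficient of $y^T$ is
\[
c_T=\sum_M \pm 2^{w(M)},
\]
where the sum runs over perfect matchings $M$ with exactly $T$ red edges.

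\textbf{Step 2: isolation makes $c_T$ nonzero.} Apply the Isolation Lemma to the set family $\mathcal F_T$ of perfect matchings with exactly $T$ red edges. If $\mathcal F_T\neq\emptyset$, then with probability at least $1/2$ it has a unique minimum-weight member $M^\ast$, of weight $W$. In that case $2^W$ divides $c_T$ but $2^{W+1}$ does not, so $c_T\neq 0$. If $\mathcal F_T=\emptyset$, then $c_T=0$ identically.

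\textbf{Step 3: computing $c_T$.} Evaluate $\operatorname{Pf}(A)$ at $|V|/2+1$ distinct integer values of $y$. Each evaluation is an exact Pfaffian of an integer skew-symmetric matrix, computed by fraction-free Gaussian elimination. Then recover the coefficients of $\operatorname{Pf}(A)$ by interpolation. All numbers involved have $\operatorname{poly}(|V|+|E|)$ bits, so the whole computation runs in polynomial time.

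\textbf{Step 4: extracting $M^\ast$.} This is the main obstacle, since we must read off $M^\ast$ rather than just detect it. Use the row expansion
\[
\operatorname{Pf}(A)=\sum_{v}\pm A_{uv}\operatorname{Pf}(A_{\hat u\hat v}),
\]
where $A_{\hat u\hat v}$ denotes $A$ with rows and columns $u,v$ deleted. For each edge $e=uv$, compute the coefficient $c_T^{(e)}$ of $y^{T}$ in $A_{uv}\operatorname{Pf}(A_{\hat u\hat v})$, again by interpolation. This is the signed weighted sum over matchings in $\mathcal F_T$ that contain $e$. Under isolation, $c_T^{(e)}/2^W$ is odd exactly when $e\in M^\ast$. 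Let $M$ be the set of edges with this property. Then deterministically check that $M$ is a perfect matching with $|M\cap E_{\mathrm{red}}|=T$; if the check fails, report failure.

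\textbf{Step 5: error analysis.} The final check guarantees that any output is correct, so on no-instances the algorithm reports failure with probability $1$. On yes-instances it succeeds with probability at least $1/2$. Two independent repetitions raise this to at least $3/4\ge 2/3$, and $O(\log(1/\varepsilon))$ repetitions give success probability $1-\varepsilon$.
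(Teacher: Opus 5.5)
Your proposal is correct. It is the classical Mulmuley--Vazirani--Vazirani isolation-lemma argument: the Pfaffian of a Tutte matrix with a variable $y$ marking red edges, interpolation to read off the coefficient of $y^T$, parity tests on the terms $2^{w(e)}y^{[e\in E_{\mathrm{red}}]}\operatorname{Pf}(A_{\hat u\hat v})$ to extract the isolated matching, and a deterministic final check for one-sided error. The paper does not reprove this theorem but simply cites it.

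One point is worth adding. The graph $H'$ to which the paper applies this theorem may contain parallel edges, because parallel color-$0$ edges of $H$ survive unchanged. In that case let $A_{uv}$ be the sum of the terms $2^{w(e)}y^{[e\in E_{\mathrm{red}}]}$ over the parallel $uv$-edges, and test each edge $e$ individually using its own term times $\operatorname{Pf}(A_{\hat u\hat v})$; nothing else in your argument changes.
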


The original theorem of Mulmuley--Vazirani--Vazirani applies to general graphs.
Applying Theorem~\ref{thm:mvv-exact-matching} to the $Q$-augmented
red-black graph $H'$ gives a randomized algorithm whose running time is polynomial
in $|V'|+|E'|$. We now prove Theorem~\ref{thm:ccm}.

\begin{proof}[Proof of Theorem~\ref{thm:ccm}]
Let $q=\sum_{i=0}^{h-1}r_i$. If $q=0$, the algorithm returns the empty matching. We therefore assume that $q\ge1$.
If $q>\lfloor \frac{|V|}{2}\rfloor$, then no matching of
size $q$ exists in $(V,E)$, and the algorithm reports failure. We therefore assume
$q\le \lfloor \frac{|V|}{2}\rfloor$, and set $Q=q+1$.

We construct the red-black $Q$-augmented graph $H'=(V',E')$ as in
Proposition~\ref{prop}, and set the red target to be
$T=\sum_{i=0}^{h-1}r_iQ^i$.
By Proposition~\ref{prop}, the original Color-Counted Matching instance has a solution if and only
if $H'$ has a perfect matching with exactly $T$ red edges.

We now apply the exact matching theorem of Mulmuley--Vazirani--Vazirani to the
red-black graph $H'$ with red-edge set $E'_{\mathrm{red}}$ and target $T$, where $E'_{\mathrm{red}}$ is the set of red edges of $E'$.
If such a perfect matching exists, the MVV algorithm outputs one with
probability at least $2/3$; if none exists, it reports failure with probability
$1$. Any returned perfect matching can be checked deterministically.

The running time for building the red-black $Q$-augmented graph is linear in its size, namely $
 |V'|+|E'| = O\!\left(|V|^2+|E|Q^{h-1}\right)$.
The MVV algorithm runs in time polynomial in the size of its input, so the
total running time is $ \left(|V|^2+|E|Q^{h-1}\right)^{O(1)}.$
Since $Q=q+1$, this is
$\left(|V|^2+|E|(q+1)^{h-1}\right)^{O(1)}$.

Finally, if the MVV algorithm returns a perfect matching of $H'$ with exactly
$T$ red edges, the constructive correspondence in Proposition~\ref{prop} allows us to extract the active path gadgets from it. These active gadgets form a matching
of $H$ with color-count vector $(r_0,\dots,r_{h-1})$. Hence the algorithm
outputs a correct Color-Counted Matching solution. The failure probability can be reduced by
independent repetition.
\end{proof}

\section*{Acknowledgments}
We thank Alexandra Lassota for drawing our attention to the ICALP
2022 exact-matching result~\cite{lassotaLukasiewiczPolak2022exactMatching}.
Research reported in this paper was partially supported by the Berlin
Mathematics Research Center MATH$^+$ (EXC-2046/2, project ID 390685689),
funded by the Deutsche Forschungsgemeinschaft (DFG, German Research
Foundation) under Germany's Excellence Strategy as well as by the Natural Sciences and Engineering Research Council of Canada Discovery (Grant program number RGPIN-2026-06553).

\bibliographystyle{plainnat}
\bibliography{references}

 \end{document}